\newcommand{\hili}[1]{\colorbox{yellow}{#1}}
\newcommand{\x}{{\bf{x}}}
\def\thm@space@setup{\thm@preskip=3pt
\thm@postskip=3pt}
\newcommand{\hilight}[1]{\colorbox{yellow}{#1}}
\renewcommand{\captionfont}{\small\itshape}
\renewcommand{\captionlabelfont}{\small\scshape}
\renewcommand{\subcapsize}{\scriptsize}
\renewcommand{\subcaplabelfont}{\normalfont}
\renewcommand{\subcapfont}{\itshape}
\renewcommand{\subfigcapmargin}{30pt}
\renewcommand{\subfigcapskip}{0pt}
\renewcommand{\subfigbottomskip}{0pt}
\DeclareMathOperator{\sgn}{sgn}
\newcommand*\patchAmsMathEnvironmentForLineno[1]{%
  \expandafter\let\csname old#1\expandafter\endcsname\csname #1\endcsname
  \expandafter\let\csname oldend#1\expandafter\endcsname\csname end#1\endcsname
  \renewenvironment{#1}%
     {\linenomath\csname old#1\endcsname}%
     {\csname oldend#1\endcsname\endlinenomath}}%
\newcommand*\patchBothAmsMathEnvironmentsForLineno[1]{%
  \patchAmsMathEnvironmentForLineno{#1}%
  \patchAmsMathEnvironmentForLineno{#1*}}%
\numberwithin{equation}{section}
\numberwithin{table}{section}
\numberwithin{figure}{section}
\newtheorem{definition}{Definition}
\newtheorem{theorem}{Theorem}
\newtheorem{lemma}{Lemma}
\newtheorem{remark}{Remark}
\newtheorem{proposition}{Proposition}
\newtheorem{corollary}{Corollary}
\numberwithin{definition}{section}
\numberwithin{theorem}{section}
\numberwithin{lemma}{section}
\numberwithin{remark}{section}
\numberwithin{assumption}{section}
\numberwithin{condition}{section}
\numberwithin{property}{section}
\numberwithin{proposition}{section}
\numberwithin{corollary}{section}
\numberwithin{algorithm}{section}
\newcommand{\e}{\mathrm{e}}
\newcommand{\blue}{\color{black}}
\newcommand{\myblue}{\color{black}}
\newcommand{\g}{\color{black}}
\newcommand{\red}{\color{black}}
\newcommand{\EQ}{\begin{equation}}
\newcommand{\EN}{\end{equation}}
\newcommand{\EQS}{\begin{equation*}}
\newcommand{\ENS}{\end{equation*}}
\newcommand{\EQA}{\begin{eqnarray}}
\newcommand{\ENA}{\end{eqnarray}}
\newcommand{\EQAS}{\begin{eqnarray*}}
\newcommand{\ENAS}{\end{eqnarray*}}
\newcommand{\AL}{\begin{align}}
\newcommand{\AN}{\end{align}}
\newcommand{\ALS}{\begin{align*}}
\newcommand{\ANS}{\end{align*}}
\newcommand{\md}{\mathrm{d}}
\newcommand{\Ebb}{\mathbb{E}}
\newcommand{\Ibb}{\mathbb{I}}
\newcommand{\Qbb}{\mathbb{Q}}
\newcommand{\tn}{t_m}
\newcommand{\tnn}{t_{m+1}}
\newcommand{\tnp}{t_{m-1}}
\newcommand{\zn}{z_n}
\newcommand{\znn}{z_{n+1}}
\newcommand{\znp}{z_{n-1}}
\newcommand{\zp}{z^{\prime}}
\newcommand{\Var}{\mathcal{V}}
\newcommand{\var}{\nu}
\newcommand{\varn}{\nu_n}
\newcommand{\varnn}{\nu_{n+1}}
\newcommand{\varnp}{\nu_{n-1}}
\newcommand{\Lnv}{\mathcal{Q}}
\newcommand{\lnv}{\varsigma}
\newcommand{\lnvn}{\varsigma_n}
\newcommand{\lnvnn}{\varsigma_{n+1}}
\newcommand{\lnvnp}{\varsigma_{n-1}}
\newcommand{\lnvp}{\varsigma^{\prime}}
\newcommand{\ds}{\displaystyle}
\def\r{\right}
\def\l{\left}
\def\f{\frac}
\def\b{\textbf}
\def\s{\sqrt}
\def\i{\infty}
\def\t{\text}
\newcommand{\Acal}{\mathcal{A}}
\newcommand{\Bcal}{\mathcal{B}}
\newcommand{\Ccal}{\mathcal{C}}
\newcommand{\Fcal}{\mathcal{F}}
\newcommand{\Gcal}{\mathcal{G}}
\newcommand{\Hcal}{\mathcal{H}}
\newcommand{\Jcal}{\mathcal{J}}
\newcommand{\Lcal}{\mathcal{L}}
\newcommand{\Mcal}{\mathcal{M}}
\newcommand{\Ncal}{\mathcal{N}}
\newcommand{\Ocal}{\mathcal{O}}
\newcommand{\Pcal}{\mathcal{P}}
\newcommand{\Qcal}{\mathcal{Q}}
\newcommand{\Rcal}{\mathcal{R}}
\newcommand{\Scal}{\mathcal{S}}
\newcommand{\Tcal}{\mathcal{T}}
\newcommand{\Zcal}{\mathcal{Z}}
\newcommand{\sinc}{\text{sinc}}
\newcommand{\Rit}{\mathit{R}}
\newcommand{\winf}{s_{\scalebox{0.6}{-$\infty$}}}
\newcommand{\wpinf}{s_{\scalebox{0.6}{$\infty$}}}
\newcommand{\epinf}{e^{\wpinf}}
\newcommand{\einf}{e^{\winf}}
\def\r{\right}
\def\l{\left}
\newcommand{\Oinf}{\Omega^{\scalebox{0.5}{$\infty$}}}
\newcommand{\myin}{\scalebox{0.7}{\text{in}}}
\newcommand{\subalign}[1]{%
  \vcenter{%
    \Let@ \restore@math@cr \default@tag
    \baselineskip\fontdimen10 \scriptfont\tw@
    \advance\baselineskip\fontdimen12 \scriptfont\tw@
    \lineskip\thr@@\fontdimen8 \scriptfont\thr@@
    \lineskiplimit\lineskip
    \ialign{\hfil$\m@th\scriptstyle##$&$\m@th\scriptstyle{}##$\crcr
      #1\crcr
    }%
  }
}
\newcommand{\errorf}{\scalebox{0.9}{$\mathcal{E}_{f}$}}
\newcommand{\errorfprime}{\scalebox{0.9}{$\mathcal{E}_{f}$}}
\newcommand{\errorp}{\scalebox{0.9}{$\mathcal{E}_{o}$}}
\newcommand{\errorpprime}{\scalebox{0.9}{$\mathcal{E}_{o}$}}
\newcommand{\errorg}{\scalebox{0.9}{$\mathcal{E}_{\hat{g}}$}}
\newcommand{\errorgprime}{\scalebox{0.9}{$\mathcal{E}_{\hat{g}}$}}
\newcommand{\errorb}{\scalebox{0.9}{$\mathcal{E}_{b}$}}
\newcommand{\errorbprime}{\scalebox{0.9}{$\mathcal{E}_{b}$}}
\newcommand{\errorc}{\scalebox{0.9}{$\mathcal{E}_{c}$}}
\newcommand{\errorcprime}{\scalebox{0.9}{$\mathcal{E}_{c}$}}
\newcommand{\errorq}{\scalebox{0.9}{$\mathcal{E}_{q}$}}
\newcommand{\errorqprime}{\scalebox{0.9}{$\mathcal{E}_{q}\varsigma$}}
\def\allfiles{}
\begin{document}
\title{A monotone numerical integration method for mean-variance portfolio optimization
under jump-diffusion models}
\author{
Hanwen Zhang
         \thanks{School of Mathematics and Physics, The University of Queensland, St Lucia, Brisbane 4072, Australia,
email: \texttt{hanwen.zhang1@uqconnect.edu.au}.}
\and
Duy-Minh Dang\thanks{School of Mathematics and Physics, The University of Queensland, St Lucia, Brisbane 4072, Australia,
email: \texttt{duyminh.dang@uq.edu.au}}
}
\date{May 15, 2023}
\maketitle
\begin{abstract}
We develop a highly efficient, easy-to-implement, and strictly monotone numerical integration method for Mean-Variance (MV) portfolio optimization.
This method proves very efficient in realistic contexts, which involve
jump-diffusion dynamics of the underlying controlled processes, discrete rebalancing, and the application of investment constraints, namely no-bankruptcy and leverage.

A crucial element of the MV portfolio optimization formulation over each rebalancing interval is a convolution integral, which involves a conditional density of the logarithm of the amount invested in the risky asset. Using a known closed-form expression for the Fourier transform of this conditional density, we derive an infinite series representation for the conditional density where each term is strictly positive and explicitly computable. As a result, the convolution integral can be readily approximated through a monotone integration scheme, such as a composite quadrature rule typically available in most programming languages.
The computational complexity of our proposed method is an order of magnitude lower than that of existing monotone finite difference methods. To further enhance efficiency, we propose an implementation of this monotone integration scheme via Fast Fourier Transforms, exploiting the Toeplitz matrix structure.

The proposed monotone numerical integration scheme is proven to be both $\ell_{\infty}$-stable and pointwise consistent, and we rigorously establish its pointwise convergence to the unique solution of the MV portfolio optimization problem. We also intuitively demonstrate that,
as the rebalancing time interval approaches zero, the proposed scheme converges to a continuously observed impulse control formulation for MV optimization expressed as a Hamilton-Jacobi-Bellman Quasi-Variational Inequality. Numerical results show remarkable agreement with benchmark solutions obtained through finite differences and Monte Carlo simulation, underscoring the effectiveness of our approach.

\vspace{9pt}
\noindent
\textbf{Keywords:} mean-variance, portfolio optimization, monotonicity, numerical integration method, jump-diffusion, discrete rebalancing
\end{abstract}

\section{Introduction}

Long-term investors, such as holders of Defined Contribution plans,
are typically motivated by asset allocation strategies which are optimal under multi-period criteria.\footnote{The holder of a Defined Contribution  plan is effectively responsible to make investment decisions for both (i) the accumulation phase
(pre-retirement) of about thirty years or more, and (ii) the decumulation phase (in retirement), of perhaps twenty years.}
As a result, multi-period portfolio optimisation plays a central role in asset allocation.
In particular, originating with \cite{Markowitz1952}, mean-variance (MV) portfolio optimization forms the cornerstone of asset allocation (\cite{EltonGruberEtAlBOOK}), in part due to its intuitive nature which is the trade-off between risk (variance) and reward (mean). In multi-period  settings, MV portfolio optimization aims to obtain an investment strategy (or control) that maximizes the expected value of the terminal wealth of the portfolio, for a given level of risk as measured by the associated variance of the terminal wealth \cite{ZhouLi2000}.
In recent years, 
multi-period MV optimization has received considerable attention in in institutional settings, including
in  pension fund and insurance applications - see for example \cite{HojgaardVigna2007,MenoncinVigna2013,LiangBaiGuo2014,Nkeki2014,SunLiZeng2016,Vigna_efficiency2014,WangChen2018,WangChen2019,WuZeng2015, vigna2022tail, ForsythVetzalWestmacott2019,ForsythVetzal2019,ChenLiGuo2013,LinQian2016,WeiWang2017,ZhaoShenZeng2016,ZhouXiaoYinZengLin2016,
LiForsyth2019}, among many others.

It is important to distinguish between two categories of optimal investment strategies (optimal controls) for portfolio optimization.
The first category, referred to as pre-commitment, typically results in time-inconsistent optimal strategies (\cite{ZhouLi2000, LiNg2000,Vigna2016TC, DangForsyth2016, DangForsyth2014}). The second category, namely the time-consistent or game theoretical approach, guarantees the time-consistency of the resulting optimal strategy by imposing a time-consistency constraint (\cite{BasakChabakauri2010,BjorkMurgoci2014,WangForsyth2011,CongOosterlee2016, PvSDangForsyth2018_TCMV}).
{\myblue{The time-inconsistency of pre-commitment strategies is because}}
the variance term in the MV-objective is not separable in the sense of dynamic programming
(see \cite{BasakChabakauri2010, Vigna2016TC}). However, pre-commitment
strategies are typically {\myblue{time-consistent}} under an alternative induced
objective function \cite{StrubLiCui2019}, and hence implementable.
The merits and demerits of time consistent and pre-commitment strategies
are also discussed in \cite{vigna2022tail}.
In subsequent discussions, unless otherwise stated, both time consistent and pre-commitment strategies
are  collectively referred to strategies or controls.

\subsection{Background}
In the parametric approach, a parametric stochastic model is postulated, e.g.\ diffusion dynamics,
and then is calibrated to market-observed data.\footnote{Recently, data-driven (i.e.\ non-parametric) methods have been proposed for portfolio optimization under different
optimality criteria, including mean-variance  \cite{LiForsyth2019, butler2021data, ni2022optimal}.
Nonetheless, monotonicity of NN-based methods has not been established.}
A key concern about, and perhaps also a criticism against, MV portfolio optimization in a parametric setting is
its potential lack of robustness to model misspecification error.
This criticism originated from the fact that,
in single-period settings, MV portfolio optimization can provide notoriously unstable asset allocation
strategies arising from small changes in the underlying asset parameters (\cite{MichaudBOOK, Sato2019, PerrinRoncalli2019,BourgeronEtAl2018}).
Nonetheless, in the case of multi-period MV optimization, research findings  indicate that,
when the risky asset dynamics are allowed to follow pure-diffusion dynamics (e.g.\ GBM) or any of the standard finite-activity
jump-diffusion models commonly encountered in financial settings, such as those considered in this work,
the pre-commitment and time-consistent MV outcomes of terminal wealth are generally very robust to model
misspecification errors \cite{PvSDangForsyth2019_Robust}.

It is well-documented in the finance literature that jumps are often present
in the price processes of risky assets (see, for example, \cite{ContTankovBOOK,RamezaniZeng2007}).
In addition, findings in previous research work on MV portfolio optimization (pre-commitment and time-consistency strategies)
also indicate that (i) jumps in the price processes of risky assets, such as Merton model \cite{MertonJumps1976}
and the Kou model \cite{KouOriginal}, and (ii) realistic investment  constraints, such as no-bankruptcy or leverage, have substantial impact on efficient frontiers and optimal investment strategies of MV portfolio optimization \cite{DangForsyth2014, PvSDangForsyth2018_TCMV}.
Furthermore, the results of \cite{MaForsyth2016} show that the effects of stochastic
volatility, with realistic mean-reverting dynamics, are not important for long-term investors with time
horizons greater than 10 years.

Furthermore,  for multi-period MV optimization, it is documented in the literature that
the composition of the risky asset basket remains relatively stable over time, which suggests that the primary question remains
the overall risky asset basket vs.\ the risk-free asset composition of the portfolio, instead of the exact composition of the risky asset basket.
See the available analytical solutions for multi-asset time-consistent MV problems  (see, for example, \cite{ZengLi2011}) as well as pre-commitment MV problems (see for example \cite{LiNg2000}). Therefore, it is reasonable to consider a well-diversified index, instead of a single stock or a basket of stocks,
as common in the MV literature  \cite{PvSDangForsyth2018_MQV, van2021practical, PvSDangForsyth2019_Robust,  DangForsythVetzal2017, PvSDangForsyth2019_Distributions,  PvSDangForsyth2018_TCMV}. This is the modeling approach adopted in this work, resulting in
a low dimensional multi-period MV optimization problem.

\subsubsection{Monotonicity and no-arbitrage}
In general, since solutions to stochastic optimal control problems, including that of the MV portfolio optimization problem,
are often non-smooth, convergence issues of numerical methods, especially monotonicity considerations, are of primary importance.
This is because, in the context of numerical methods for optimal control problems, optimal decisions are determined by comparing numerically computed value functions. Non-monotone schemes could produce numerical solutions that fail to converge to financially relevant solution,
i.e.\  a violation of the discrete no-arbitrage principle \cite{Oberman2006, pooley2003, warin2016}.

To illustrate the above point further, consider a generic time-advancement scheme
from time-($m$-$1$) to time-$m$ of the form
\EQ
\label{eq:ex_scheme}
v_n^{m} = \sum_{\ell \in \mathcal{L}_n} \omega_{n, \ell}~v_{\ell}^{m-1}.
\EN
Here,
$\omega_{n, \ell}$ are the weights and and  $\mathcal{L}_n$ is an index set typically capturing  the computational stencil associated
with the $n$-th spatial partition point. This time-advancement scheme is monotone if, for any $n$-th spatial partition point, we have
$\omega_{n, \ell} \ge 0$, $\forall \ell \in \mathcal{L}_n$. Optimal controls at time-$m$ are determined typically  by comparing  candidates numerically computed
from applying intervention on time-advancement results $v_n^{m}$.  Therefore, these candidates
need to be approximated using a monotone scheme as well. If interpolation is needed in this step,
linear interpolation is commonly chosen, due to its monotonicity\footnote{Other non-monotone interpolation schemes are discussed in, for example,  \cite{FC1980, RF2016}.}.
Loss of monotonicity occurring in the time-advancement may result in $v_n^{m} < 0$ even
$v_{\ell}^{m-1} \ge0$ for all $\ell \in \mathcal{L}_n$.

\subsubsection{Numerical methods}
For stochastic optimal control problems with a small number of stochastic factors,
the PDE approach is often a natural choice.
To the best of our knowledge, finite difference (FD) methods remain the only pointwise convergent methods established for pre-commitment and time-consistent MV portfolio optimization in realistic investment scenarios. These scenarios involve the simultaneous application of various types of investment constraints and modeling assumptions, including jumps in the price processes of risky assets, as highlighted in \cite{DangForsyth2014, PvSDangForsyth2018_TCMV}.
These FD methods achieve monotonicity in time-advancement through a positive coefficient finite difference discretization method (for the partial derivatives), which is combined with implicit time-stepping. Despite their effectiveness, finite difference methods present significant computational challenges in multi-period settings with long maturities. In particular, they necessitate time-stepping between rebalancing dates, which often occur annually (i.e., control monitoring dates). This time-stepping requirement introduces errors and substantially increase the computational cost of FD  methods.

Fourier-based integration methods frequently rely on the presence of an analytical expression for the Fourier transform of the underlying transition density function, or an associated Green's function, as highlighted in various research such as \cite{Pavel2015, Pavel2016, alonso-garcca_wood_ziveyi_2018, Huang2018, ForsythLabahn2017, online23, LuDang2022}. Notably, the Fourier cosine series expansion method \cite{Fang2008, Ruijter2013} can achieve high-order convergence for piecewise smooth problems. However, in cases of optimal control problems, which are usually non-smooth, such high-order convergence should not be anticipated.

When applicable, Fourier-based methods offer unique advantages over FD methods and Monte Carlo simulation.  These advantages include the absence of timestepping errors between rebalancing (or control monitoring) dates, and the ability to handle complex underlying dynamics such as jump-diffusion, regime-switching, and stochastic variance in a straightforward manner.
However, standard Fourier-based methods, much like Monte Carlo simulations, do have a significant drawback: they can potentially lose monotonicity. This potential loss of monotonicity in the context of variable annuities is discussed in depth in \cite{Huang2018, Huang2015}.

In more detail, consider $g(s, s', t_m - t_{m-1})$ as the underlying (scaled) transition density, or a related Green's function. For L\'{e}vy processes, which have independent and stationary increments, $g(\cdot)$ relies on $s$ and $s'$ only through their difference, i.e., $g(s, s', \cdot)= g(s-s', \cdot)$. Thus, the advancement of solutions between control monitoring dates takes the form of a convolution integral as follows
\EQ
\label{eq:gen}
v(s, t_{m-1}) = \int_{\mathbb{R}} g\l(s - s', t_m - t_{m-1}\r) v\l(s', t_m\r)\md s'.
\EN
In the case of L\'{e}vy processes, even though $g(\cdot)$ is not known analytically, the L\'{e}vy-Khintchine formula provides an explicit representation of the Fourier transform (or the characteristic function) of $g(\cdot)$, denoted by $G(\cdot)$. This permits the use of Fourier series expansion to reconstruct the entire integral \eqref{eq:gen}, not just the integrand. The approach creates a numerical integration scheme of the form \eqref{eq:ex_scheme}, with the weights $\omega_{n, \ell}$ typically available in the Fourier domain via $G(\cdot)$. Consequently, the algorithms boil down to the utilization of finite FFTs, which operate efficiently on most platforms. However, there is no assurance that the weights $\omega_{n, \ell}$ are non-negative for all $n$ and $l$, which can potentially lead to a loss of monotonicity.

As highlighted in \cite{barles-souganidis:1991}, the requirement for monotonicity in a numerical scheme can be relaxed. This notion of weak monotonicity was initially explored in \cite{Bokanowski2018} and was later examined in great detail in \cite{ForsythLabahn2017, online23, LuDang2022, LuDang2023} for general control problems in finance, including variable annuities. More specifically, the condition for monotonicity, i.e.\ $\omega_{n, \ell} \ge 0$ for all $\ell \in \mathcal{L}_n$, is relaxed to $\sum{\ell \in \mathcal{L}_n} |\min(\omega{n, \ell}, 0)| \le \epsilon$, with $\epsilon > 0$ being a user-defined tolerance for monotonicity. By projecting the underlying transition density or an associated Green's function onto linear basis functions, this approach allows for full control over potential monotonicity loss via the tolerance $\epsilon > 0$: the potential monotonicity loss can be quantified and restricted to $\mathcal{O}(\epsilon)$, thereby enabling (pointwise) convergence as $\epsilon \to 0$.

\subsection{Objectives}
In general, many industry practitioners find implementing monotone finite difference methods for jump-diffusion models to be complex and time-consuming, particularly when striving to utilize central differencing as much as possible, as proposed in \cite{wang08}.
As well-noted in the literature (e.g.\ \cite{pooley2003,RF2016}), many seemingly reasonable finite difference discretization schemes can yield incorrect solutions. In addition, while the concept of (strict) monotonicity in numerical schemes is directly tied to the discrete no-arbitrage principle, making it easy to comprehend, weak monotonicity is less clear, which further hinders its application in practice. Moreover, the convergence analysis of weakly monotone schemes is often complex, potentially introducing additional obstacles to their practical application.


This paper aims to fill the aforementioned research gap through the development of
an efficient, easy-to-implement and monotone numerical integration method for MV portfolio optimization
under a realistic setting. This setting involves the simultaneous application of different types of investment constraints
and jump-diffusion dynamics for the price processes of risky assets.
In our method, the guarantee of the key monotonicity property comes with a trade-off: a modest level of analytical tractability is required for the jump sizes. Essentially, this stipulates that the associated jump-diffusion model should provide a closed-form expression for European vanilla options. This condition, however, conveniently accommodates a number of widely used models in finance, marking it as a relatively mild requirement in practice.
To demonstrate this, we presents results for two popular distributions for jump sizes in finance: the normal distribution as put forth by \cite{MertonJumps1976}, and the asymmetric double-exponential distribution as proposed in \cite{KouOriginal}.
Although we focus on the pre-commitment strategy case, the proposed method can be extended to time-consistent MV optimization
in a straightforward manner by enforcing a time-consistency constraint as in \cite{PvSDangForsyth2018_TCMV}.
%
%

The main contributions of the paper are as follows.
\begin{itemize}
\item[(i)]
 We present a recursive and localized formulation of the pre-commitment MV portfolio optimization
 under a realistic context that involves the simultaneous application of different types of investment constraints
 and jump-diffusion dynamics \cite{KouOriginal, MertonJumps1976}.
Over each rebalancing interval, the key component of the formulation of MV portfolio optimization is a convolution integral involving a conditional density of the logarithm of amount invested in the risky asset.

\item[(ii)]
Through a known closed-form expression of the Fourier transform of the underlying transition density,
we derive an infinite series representation for this density in which all the terms of the series
are non-negative and readily computable explicitly. Therefore, the convolution integral can be approximated in a straightforward manner using a monotone integration scheme via a composite quadrature rule.
A significant benefit of the proposed method compared to existing monotone finite difference methods is that it offers a computational complexity that is an order of magnitude lower. Utilizing the Toeplitz matrix structure, we propose an efficient implementation of the proposed
monotone integration scheme via FFTs.

\item[(iii)]
We mathematically demonstrate that the proposed monotone scheme is also $\ell_{\infty}$-stable and pointwise consistent with the convolution integral formulation.
We rigourously prove the pointwise convergence of the scheme as the discretization parameter approach zero.
As the rebalancing time interval approaches zero, we intuitively demonstrate that the proposed scheme converges
to a continuously observed impulse control formulation for MV optimization in the form of a Hamilton-Jacobi-Bellman Quasi-Variational Inequality.

\item[(iv)]
All numerical experiments are conducted using model parameters calibrated to
inflation-adjusted, long-term US market data (89 years), enabling realistic conclusions to be
drawn from the results. Numerical experiments demonstrate an agreement with benchmark results obtained by FD method and Monte Carlo simulation
of  \cite{DangForsyth2014}.
\end{itemize}
Although we focus specifically on monotone integration methods for multi-period MV portfolio optimization, our comprehensive and systematic approach could serve as numerical and convergence analysis framework for the development of similar monotone integration methods
for other multi-period or continuously observed control problems in finance.

In Section~\ref{sc:model}, we describe the underlying dynamics and a multi-period rebalancing framework for MV portfolio optimization. A localization of the
pre-commitment MV portfolio optimization in the form of an convolution integral together with appropriate boundary conditions are presented in Section~\ref{sc:MV}. Also therein, we present an infinite series representation of the transition density.
A simple and easy-to-implement monotone numerical integration method via a composite quadrature rule is described in Section~\ref{sc:num}.
In Section~\ref{sc:conv}, we mathematically establish pointwise convergence the proposed integration method.
Section~\ref{sc:conti} explore possible convergence between the proposed scheme and a Hamilton-Jacobi-Bellman Quasi-Variational Inequality
arising from continuously  observed impulse control formulation for MV optimization.
Numerical results  are given in Section~\ref{sc:num}.
Section~\ref{sc:conclude} concludes the paper and outlines possible future work.

\section{Modelling}
\label{sc:model}
We consider portfolios consisting of a risk-free asset and a well-diversified stock index (the risky asset).
{\red{With respect to the risk-free asset, we consider different lending and borrowing rates. Specifically, we denote by $r_b$ and $r_{\iota}$
the positive, continuously compounded rates at which the investor can respectively borrow funds or earn on cash deposits (with $r_b > r_{\iota}$). We make the standard assumption that the real world drift rate $\mu$ of the risky asset is strictly greater than $r_{\iota}$.
Since there is only one risky asset, with a constant risk-aversion parameter, it is never MV-optimal to short stock.}}
Therefore, the amount invest in the risky-asset is non-negative for all  $t\in [0, T]$,
where $T>0$ denotes the fixed investment time horizon or maturity.
In contrast, we do allow short positions in the risk-free asset, i.e.\ it is possible that the amount
invested in the risk-free asset is negative. With this in mind, we denote by
$B_t \equiv B(t)$ the time-$t$ amount invested in the risk-free asset
and by $S_t \equiv S(t)$ the natural logarithm of the time-$t$ amount invested in the risky
(so that $e^{S_t}$ is the amount).

For defining the jump-diffusion model dynamics, let $\xi$ be a random variable denoting the jump size. For any functional $f$, we let $f_{t^{-}} := \lim_{\epsilon \rightarrow 0^+} f_{t-\epsilon}$ and $f_{t^{+}} := \lim_{\epsilon \rightarrow 0^+} f_{t+\epsilon}$. Informally, $t^{-}$ (resp.\ $t^{+}$) denotes the instant of time immediately before (resp.\ after) the forward time $t \in [0,T]$. When a jump occurs, we have
$S_{t} = S_{t^{-}} + \xi $.

\subsection{Discrete portfolio rebalancing}
Define $\mathcal{T}_{M}$ as the set of $M$ predetermined, equally
spaced rebalancing times in $\left[0,T\right]$,
\begin{eqnarray}
\mathcal{T}_{M} & = & \left\{ \left.t_{m}\right|t_{m}=m\Delta t,\;m=0,\ldots,M-1\right\} ,\quad\Delta t=T/M.\label{eq:T_N}
\end{eqnarray}
We adopt the convention that $t_M = T$ and the portfolio is not rebalanced at the end of the investment horizon
$t_M = T$. The evolution of the portfolio over a rebalancing interval $[t_{m-1}, t_{m}]$, $t_{m-1}\in \Tcal_{M}$,
can be viewed as consisting of three steps as follows.
Over $[t_{m-1}, t_{m-1}^+]$, $(S_t, B_t)$, change according to some rebalancing strategy (i.e.\ an impulse control).
Over the time period $[t_{m-1}^+, t_m^-]$,
there is no intervention by the investor according to some control (investment strategy),
and therefore $(S_t, B_t)$ are uncontrolled, and are assumed to follow some dynamics for all
$t \in [t_{m-1}^+, t_m^-]$.
Over $[t_m^-, t_m]$, the settlement (payment or receipt) of interest {\myblue{due for}} the time period $[t_{m-1}, t_{m}]$.
In the following, we first discuss stochastic modeling for $(S_t, B_t)$ over $[t_{m-1}^+, t_m^-]$,
then describe settlement of interest and modelling of rebalancing strategies using impulse controls.

Over the time period $[t_{m-1}^+, t_m^-]$,
in the absence of control (investor's intervention according to some control strategy),
the amounts in the risk-free and risky assets are assumed to have the
following dynamics:
\EQA
\label{eq:FX-HHW.1}
\md B_t &=& \Rcal\l(B_t\r) B_t \, \md t,  \text{ where } \Rcal\l(B_t\r) =
r_{\iota} + (r_b - r_{\iota}) \Ibb_{\{ B_t < 0\}},
\\
\md S_t &=&
\l(\mu - \lambda \kappa - \f{\sigma^2}{2} \r) \md t
+ \sigma \, \md W_t\
+ \md\l( \sum_{\ell = 1}^{\pi_t} \xi_{\ell}  \r), \quad
t\in [t_{m-1}^+, t_{m}^-].
\nonumber
\ENA
Here, as noted earlier,  $r_b$ and $r_{\iota}$ denote the positive, continuously compounded rates at which the investor can respectively borrow funds or earn on cash deposits (with $r_b > r_{\iota}$), while $\Ibb_{\l[A\r]}$ denotes the indicator function of the event $A$;
$\{W_t\}_{t \in [0, T]}$ is a standard Wiener process,
and $\mu$ and $\sigma$ are the real world drift rate and the instantaneous volatility, respectively.
The jump term $\sum_{\ell=1}^{\pi(t)}\xi_{\ell}$ is a compound Poisson process. Specifically, $\{\pi(t)\}_{0\le t \le T}$ is a Poisson process with a constant finite jump intensity $\lambda\geq 0$; and, with $\xi$ being a random variable representing the jump size, $\{\xi_{\ell}\}_{\ell = 1}^{\infty}$ are independent and identically distributed (i.i.d.) random variables having the same same distribution as the random variable $\xi$.  In the dynamics \eqref{eq:FX-HHW.1},  $\kappa=\mathbb{E}\left[e^{\xi}-1\right]$. Here,  $\mathbb{E}[\cdot]$ is the expectation operator taken under
a suitable measure. The Poisson process $\{\pi(t)\}_{0\le t \le T}$, the sequence of random variables $\{\xi_{\ell}\}_{\ell = 1}^{\infty}$,
and the Wiener process and $\{W_t\}_{0\le t \le T}$ are mutually independent.

We consider two distributions for the random variable $\xi$, namely the normal distribution \cite{MertonJumps1976} and the
asymmetric double-exponential distribution \cite{KouOriginal}.
To this end, let $p(y)$ be the probability density function (pdf) of $\xi$.
In the former case,  $\xi \sim \text{Normal}\l(\widetilde{\mu}, \widetilde{\sigma}^2\r)$,
so that its pdf is given by
\EQA
\label{eq:log_norm_pdf}
{\myblue{p(y)}} = \f{1}{\sqrt{2\pi\widetilde{\sigma}^2}}
\exp\l\{-\f{\l(y - \widetilde{\mu}\r)^2}{2\widetilde{\sigma}^2}\r\}.
\ENA
Also, in this case, $\mathbb{E}[e^{\xi}] = \exp( \widetilde{\mu} + \widetilde{\sigma}^2/2)$, and hence
$\kappa=\mathbb{E}\left[e^{\xi}-1\right]$ can be computed accordingly.
In the latter case, we consider an asymmetric double-exponential distribution for $\xi$.
Specifically, we consider
$\xi\sim \text{Asym-Double-Exponential}(q_1,\eta_1,\eta_2)$, $\l(q_1 \in(0, 1), \, \eta_1 > 1, \, {\red{\eta_2 > 0 }}\r)$
so that its pdf is given by
\EQA
\label{eq:log_exp_pdf}
{\myblue{p(y)}}  = q_1 \eta_1 e^{-\eta_1 y} \Ibb_{\l[y \ge 0\r]}
{\red{+ q_2 \eta_2 e^{\eta_2 y } \Ibb_{\l[ y < 0\r]},}}
\quad
q_1+q_2=1.
\ENA
Here $q_1$ and $q_2 = 1- q_1$ respectively are the probabilities of upward and downward
jump sizes. In this case,
$\mathbb{E}[e^\xi] = \frac{q_1\eta_1}{\eta_1 - 1} + \frac{q_2\eta_2}{\eta_2 + 1}$,
so $\kappa=\mathbb{E}\left[e^{\xi}-1\right]$ can be computed accordingly.

\subsection{Impulse controls}
Discrete portfolio rebalancing is modelled using the discrete impulse
control formulation as discussed in for example \cite{DangForsyth2014,PvSDangForsyth2018_TCMV,PvSDangForsyth2018_MQV},
which we now briefly summarize below. Let $c_{m}$ denote the impulse applied
at rebalancing time $t_{m}\in\mathcal{T}_{M}$, which corresponds
to the amount invested in the risk-free asset according to the investor's intervention
at time $t_{m}$, and let $\mathcal{Z}$ denote the set of admissible
impulse values, i.e.\ $c_{m} \in \mathcal{Z}$ for all  $t_m \in \Tcal_M$.

Let $X_t = \l(S_t, B_t\r), \, t \in \l[0,T\r]$ be the multi-dimensional underlying process,
and $x = (s,b)$ denote the state of the system.
Suppose that at time-$t_m$, the state of the system is
$x=\left(s,b\right)=\left(S\left(t_{m}\right),B\left(t_{m}\right)\right)$
for some $t_{m}\in\mathcal{T}_{M}$.
We denote by $(S_{t_m^+}, B_{t_m^+}) \equiv \left(s^+(s, b, c_m),b^+(s, b, c_m)\right)$
the state of the system immediately after the application of
the impulse $c_{m}$ at time $t_{m}$, where
\begin{eqnarray}
{\myblue{S_{t_m^+} \equiv s^+(s, b, c_m)  =   \ln\l(\max(e^s+b - c_{m} - \delta, \, \einf)\r), }}
\quad B_{t_m^+} \equiv b^+(s, b, c_m) = c_{m},
\quad t_m \in \Tcal_M.\label{eq:S+_B+}
\end{eqnarray}
{\myblue{Here, $\delta \ge 0$ is a fixed cost\footnote{It is straightforward to include a proportional cost into \eqref{eq:S+_B+}
as in \cite{DangForsyth2014}. However, to focus on the main advantages of the proposed method, we do not consider a proportional cost in this work.}; }}
since $\log(\cdot)$ is undefined if {\myblue{$e^s+b - c_{m} - \delta\le 0$}}, the amount $S_{t_{m}^+}$ becomes  {\myblue{$\ln\l(\max(e^s+b - c_{m} - \delta, \, \einf)\r)$}}
for a finite $\winf \ll 0$. 

Associated with the fixed set of rebalancing times $\Tcal_M$, defined in \eqref{eq:T_N},
an impulse control $\Ccal$ will be written as the set of impulse values
\EQA
\label{eq:control_C}
\Ccal =
\l\{ c_{m} \,|\, c_{m} \in \Zcal, \, m=0,\ldots, M-1\r\},
\ENA
and we define $\Ccal_{m}$ to be the subset of the control $\Ccal$ applicable to the set of times $\l\{\tn , \ldots, t_{M-1} \r\}$,
\EQA
\label{eq:control_C_n}
\Ccal_{m} 
=
\l\{c_l \,| \, c_l \in \Zcal, \, l=m,\ldots,M-1\r\}
\subset
\Ccal_0 \equiv \Ccal.
\ENA
In a discrete setting, the amount invested in the risk-free asset
changes only at rebalancing date. Specifically,
{\myblue{over each time interval $\l[t_{m-1}, t_m\r], \, m = 1,\ldots,M$,
we}} suppose the amount invested in
the risk-free asset at time $t_{m-1}^+$ {\myblue{after rebalancing}} being $B_{t_{m}^+} = b$.
For test function $f(S_t, B_t, t)$ with both $S_t$ and $B_t$ varying,
we model the change in $f(S_t, B_t, t)$ with $(S_t, B_t = b)$ for
$t \in [t_{m-1}^+, t_{m}^-]$.
{\myblue{Then, the amount in the risk-free asset would jump to $b e^{R(b) \Delta t}$ at time $t_m$,}}
reflecting the settlement (payment or receipt) of interest due for the time interval $[t_{m-1}, t_{m}]$,
$m = 1, \ldots, M$. Here, we note that, although there is no rebalancing at time $t_M = T$,
there is still settlement of interest for the interval $[t_{M-1}, t_{M}]$.

\subsection{Investment constraints}
\label{subsec:admissible_ctrl}
With the time-$t$ state of the system being $(s,b)$,
{\myblue{to include transaction cost, we define the liquidation value $W_{\t{liq}}(t) \equiv W_{\t{liq}}(s,b)$ to be
\EQA
\label{eq:W_dis}
W_{\t{liq}}(t) \equiv W_{\t{liq}}(s, b) = e^s + b - \delta, \quad t \in [0, T].
\ENA
}}
We strictly enforce two realistic investment constraints on the joint values of $S$ and $B$, namely a solvency
condition and a maximum leverage condition. The solvency condition takes the following form:
{\myblue{when $W_{\t{liq}}(s, b) \le 0$}}, we require that the position in the risky asset
be liquidated, the total remaining wealth be placed in the risk-free asset, and the ceasing of all
subsequent trading activities. Specifically, assume that the system is in the state
$x = \l(s,b\r) \in \Omega^{\i}$ at time $\tn$, where
 $\tn\in \Tcal_M$ and
\EQ
\label{eq:Omega}
\Omega^{\i} = \l(-\i, \i\r) \times \l(-\i, \i\r).
\EN
We define a solvency region $\Ncal$ and an insolvency or bankruptcy region $\Bcal$ as follows
{\myblue{
\EQA
\label{eq:NBcal_def}
\Ncal = \l\{\l(s,b\r) \in \Omega^{\i}: \, W_{\t{liq}}(s,b) > 0 \r\},
\,\,
\Bcal = \l\{\l(s,b\r) \in \Omega^{\i}: \, W_{\t{liq}}(s,b) \le 0 \r\}, \,\,
W_{\t{liq}}(s,b) \t{ defined in \eqref{eq:W_dis}}.
\ENA
}}
The solvency constraint can then be stated as
\EQA
\label{eq:solvency}
\t{If } \l(s,b\r) \in \Bcal \t{ at } \tn
\Rightarrow
\begin{cases}
\t{we require} \l( S_{t_m^+} = \winf, B_{t_m^+} = W(s,b)\r)
\\
\t{and {\myblue{$S_t$ remains}} so } \forall t \in \l[\tn^+, T\r],
\end{cases}
\ENA
where, as noted above, $\winf\ll 0$ and is finite.
This effectively means that  the investment in the risky asset has to be liquidated, the total wealth is
to be placed in the risk-free asset, and all subsequent trading activities much cease.

The maximum leverage constraint specifies that the leverage ratio after rebalancing
at $\tn$, where $\tn\in \Tcal_M$, is stipulated by \eqref{eq:S+_B+}
must satisfy
\EQA
\label{eq:leverage}
\f{ {\myblue{\exp(S_{t_m^+})}} }{\exp(S_{t_m^+}) + B_{t_m^+}} \le q_{\max},
\ENA
for some positive constant $q_{\max}$  typically in the range $\l[1.0,2.0\r]$.
Given above the solvency constraint and the maximum leverage constraint,
the set of admissible impulse values, namely the set $\Zcal$  is therefore
defined as follows
\begin{eqnarray}
\label{eq:Zcal_def}
  \mathcal{Z} &=&
  \begin{cases}
    ~\Big\{c_{m} \equiv B_{t_m^+} \in
  \mathbb{R}:
  (S_{t_m^+}, B_{t_m^+}) \text{ via \eqref{eq:S+_B+}}
   \Big\}
   \qquad
    \text{no constraints},
\\
  ~
      \begin{cases}
    \left\{c_{m} \equiv B_{t_m^+} \in
  \mathbb{R}:
  (S_{t_m^+}, B_{t_m^+})  \text{ via \eqref{eq:S+_B+}, s.t.\ }
     S_{t_m^+} \ge \winf \text{ and \eqref{eq:leverage}}
   \right\}
   &
   (s,b) \in \mathcal{N}
   \\
   \left\{{c_{m} = {\myblue{ W_{\t{liq}}(s,b)} }} \right\}
   &
   (s,b) \in \mathcal{B}
    \end{cases}
    \\
        \qquad \qquad\qquad \qquad\qquad \qquad\qquad \qquad\qquad \qquad
        \text{solvency \& maximum leverage}
    \end{cases}
    \nonumber
\label{Z_sol_def}
\end{eqnarray}
Based on the definition~\eqref{eq:control_C_n}, the set of admissible impulse controls is given by
\EQA
\label{A_sol_def}
\mathcal{A} &=& \l\{ \Ccal_{m} \, | \,
\Ccal_{m} \t{ defined in \eqref{eq:control_C_n}}, \, m = 0,\ldots, M-1 \r\}.
\ENA

\section{Formulation}
\label{sc:MV}
Let {\myblue{$E_{\Ccal_m}^{x,\tn} \l[W_{\t{liq}}(T)\r]$ and $Var_{\Ccal_m}^{x,\tn} \l[W_{\t{liq}}(T)\r]$}}
respectively denote the mean and variance of the terminal {\myblue{liquidation}} wealth,
given the system state $x = (s,b)$ at time $t_m$ for some $t_m \in \Tcal_M$
following the control $\Ccal_{m} \in \Acal$ over $[t_m, T]$, assuming the underlying dynamics \eqref{eq:FX-HHW.1}.
The standard scalarization method for multi-criteria optimization problem in \cite{yu1974cone} gives the mean-variance (MV) objective as
\EQA
\label{eq:MV}
\sup_{\Ccal_m \in \Acal}
\l\{
E_{\Ccal_m}^{x, \tn} \l[{\myblue{ W_{\t{liq}}(T) }}\r] -
\rho \cdot
Var_{\Ccal_m}^{x, \tn} \l[{\myblue{ W_{\t{liq}}(T) }}\r]
\r\},
\ENA
where the scalarization parameter $\rho >0$ reflects the investor's risk aversion level.
\subsection{Value function}
Dynamic programming cannot be applied directly to \eqref{eq:MV}, since no smoothing property of conditional expectation for variance. The technique of \cite{li2000optimal, zhou2000continuous} embeds \eqref{eq:MV} in a new optimisation problem, often referred to as the embedding problem, which is amenable to dynamic programming techniques.
We follow the example of \cite{cong2017robust, dang2014continuous} in defining the PCMV optimization problem as the associated embedding MV problem\footnote{For a discussion of the elimination of spurious optimization results when using the embedding formulation, see \cite{dang2016convergence}.}.
Specifically, with $\gamma \in \mathbb{R}$ being the embedding parameter, we define the value function $v \l(s,b,\tn\r)$, $m = M-1,\ldots,0$ as
follows
\EQA
\label{eq:PCMV_dis}
(PCMV_{\Delta t}(\tn;\gamma)):
\quad
v \l(s,b,\tn\r) =
\inf_{\Ccal_m \in \Acal}
E_{\Ccal_m}^{x, \tn}
\l[\l( {\myblue{ W_{\t{liq}}(T) }} - \f{\gamma}{2}\r)^2\r],
\quad
\gamma \in \mathbb{R}, \quad m = 0,\dots,M-1,
\ENA
where $W_T$ is given in \eqref{eq:W_dis}, subject to dynamics \eqref{eq:FX-HHW.1} between
rebalancing times. We denote by $\Ccal_m^*$ the optimal control for the problem $PCMV_{\Delta t} (\tn;\gamma)$,
where
\EQA
\label{eq:optimal_ctrl}
\Ccal_m^* = \l\{ c_m^*,\ldots,c_{M-1}^* \r\},
\quad m = 0, \ldots, M-1.
\ENA
For an impulse value $c\in \mathcal{Z}$,  we define the intervention operator
$\mathcal{M}(\cdot)$ applied at $t_m \in \Tcal_M$ as follows
\EQ
\label{eq:Operator_M}
\mathcal{M}(c)~v (s, b , t_m^+) =
  v\l(s^+(s, b, c), b^+(s, b, c), t_m^+\r), \quad s^+(s, b, c) \text{ and } b^+(s, b, c)
  \text{ are given in \eqref{eq:S+_B+}}.
\EN
By dynamic programming arguments \cite{Pham, puter94}, for a fixed  embedding parameter $\gamma \in \mathbb{R}$,
and $(s, b) \in \Omega^{\i}$, the recursive relationship for the value function $v(s, b, t_m)$ in \eqref{eq:PCMV_dis} is given by
\begin{linenomath}
\begin{subequations}\label{eq:recursive_comp}
\begin{empheq}[left={\empheqlbrace}]{alignat=6}
&v \l(s,b,\tn\r) &~=~& {\myblue{\l({\myblue{ W_{\t{liq}}(s,b) }} - \f{\gamma}{2} \r)^2}},
&& \quad\quad m = M,
\label{eq:recursive_b_comp_initial}
\\
&v \l(s,b,\tn\r) &~=~&  \min \left\{v \l(s,b,\tn^+\r), \inf_{c \in \mathcal{Z}} \mathcal{M}(c)~v \l(s,b,\tn^+\r)\right\},
&& \quad\quad m = M-1, \ldots, 0,
\label{eq:recursive_b_comp}
\\
&v \l(s,b,\tn^-\r) &~=~&  v \l(s,be^{R(b) \Delta t},\tn\r),
&&\quad\quad m = M, \ldots, 1,
\label{eq:recursive_interest}
\\
&v(s, b, t_{m-1}^+) &~=~&  \int_{-\infty}^{\infty}
v \l(s',b,\tn^-\r) g(s, s'; \Delta t)~ \md s',  &&\quad\quad m = M, \ldots, 1.
\label{eq:recursive_c_comp_int}
\end{empheq}
\end{subequations}
\end{linenomath}
Here, in \eqref{eq:recursive_b_comp}, the intervention operator $\mathcal{M}(\cdot)$  is
given by  \eqref{eq:Operator_M}, with the $\min\{ \cdot, \cdot\}$ operator reflecting
the optimal choice between no-rebalancing and rebalancing (which is subject to {\myblue{a fixed cost $\delta$ }});
\eqref{eq:recursive_interest} reflects the settlement (payment or receipt)
of interest due for the time interval $[t_{m-1}, t_{m}]$, $m = 1, \ldots, M$.
 In the integral \eqref{eq:recursive_c_comp_int}
the functions $g(s, s'; \Delta t)$ denotes the probability {\myblue{density of $s$, the log of the
amount invested in the risky asset at a future time ($t_{m}^-$)}}, {\myblue{and
the information $s'$ at the current time ($t_{m-1}^+$), given $\Delta t = t_{m} - t_{m-1}$}}.
Also, we note that the fact that amount invested in the risk-free asset does not change
in the in the interval $[t_{m-1}^+, t_m^-]$ is reflected in \eqref{eq:recursive_c_comp_int}
since this amount is kept constant ($=b$) on both sides of \eqref{eq:recursive_c_comp_int}.

It can be shown that $g(s, s';\Delta t)$ has the form $g(s- s'; \Delta t)$, and therefore,
in \eqref{eq:recursive_c_comp_int}, the integral  takes the form of the convolution of $g(\cdot)$ and $v(\cdot, t_m^{-})$.
That is, \eqref{eq:recursive_c_comp_int} becomes
\EQ
v(s, b, t_{m-1}^+) ~=~  \int_{-\infty}^{\infty}
v \l(s',b,\tn^-\r) g(s-s'; \Delta t)~ \md s', \quad m = M, \ldots, 1.
\label{eq:recursive_c_comp_int_1}
\EN
Although a closed-form expression for $g(s; \Delta t)$ is not known to exist, its Fourier transform, denoted by
$G(\cdot; \Delta t)$, is known in closed-form. Specifically, we recall the Fourier transform pair
\EQA
\label{eq:small_g}
\mathfrak{F}[g(s;\cdot)] =  G(\eta; \cdot) = 
\int_{-\infty}^{\infty} e^{-i \eta s}g(s; \cdot)~ds,
~\quad~
\mathfrak{F}^{-1}[G(\eta;\cdot)] = g(s; \cdot) =  
\f{1}{2\pi}
\int_{-\infty}^{\infty} e^{i \eta s}G(\eta; \cdot)~d\eta.
\ENA
A closed-form expression for $G \l(\eta; \Delta t \r)$ is given by
\EQA
G \l( \eta;\Delta t \r) = \exp \left(\Psi \l( \eta \r) \Delta t \right), ~~\text{with}~~
\Psi(\eta) = \l( - \frac{\sigma^2 \eta^2}{2}  + \l( \mu - \lambda \kappa - \frac{\sigma^2}{2} \r)\l(i \eta \r) -  \lambda
+ \lambda \Gamma\l( \eta \r) \r).
\label{eq:big_F}
\ENA
Here, $\Gamma\l( \eta \r) = \int_{-\infty}^{\infty} p(y)~e^{i \eta y}~dy$, where $p(y)$ is the probability density function of
$\ln\left(\xi\right)$ with $\xi$ being the random variable representing the jump multiplier.

\subsection{An infinite series representation of $\boldsymbol{g\l(\cdot\r)}$}
The proposed monotone integration method depends on an infinite series representation of
the probability density function $g(\cdot)$, which is presented in Lemma~\ref{lemma:1}.
\begin{lemma}
\label{lemma:1}
Let $g(s; \Delta t)$ and $G(\eta; \Delta t)$ be a Fourier transform pair defined in \eqref{eq:small_g} and
$G(\eta; \Delta t)$ is given in \eqref{eq:big_F}. Then $g(s; \Delta t) \equiv g\l(s; \Delta t,\infty\r)$ can be written as
\begin{linenomath}
\postdisplaypenalty=0
\begin{align}
g\l(s; \Delta t,\infty\r)
&=
{\red{\f{1}{\s{4 \pi \alpha}}}} \sum_{k=0}^{\infty} \f{\l(\lambda \Delta t\r)^{k}}{k!}
\int_{-\infty}^\infty \ldots \int_{-\infty}^\infty
            	\exp\l(\theta  -
	\f{\l(\beta + s +  Y_k\r)^2}
	{4 \alpha}\r)
    \l(\prod_{\ell = 1}^{k} p(y_{\ell})\r) 
    \md y_1 \ldots \md y_k,
\nonumber\\
{\text{where }} & \alpha = \f{\sigma^2}{2} \, \Delta t,
\quad
\beta = \l(\mu - \lambda \kappa - \f{\sigma^2}{2}\r) \, \Delta t,
\quad
\theta = -\lambda \Delta t,\quad Y_k = \sum_{\ell=1}^{k} y_{\ell}, \quad Y_0 = 0,
\label{eq:g_series}
\end{align}
\end{linenomath}
and $p(y)$ is the PDF of the random variable $\xi$.
{\myblue{When $k = 0$, we have $g\l(s; \Delta t,0\r) = \f{1}{\s{4\pi \alpha }} \exp\l(\theta - \f{\l(\beta + s + \r)^2}{4 \alpha}\r)$.}}
%
\end{lemma}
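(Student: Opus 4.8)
The plan is to work directly from the closed-form Fourier transform $G(\eta;\Delta t)=\exp(\Psi(\eta)\Delta t)$ in \eqref{eq:big_F} and invert it term by term. First I would factor the exponent according to the four pieces of $\Psi$, writing
\[
G(\eta;\Delta t)=e^{\theta}\,e^{-\alpha\eta^2+i\beta\eta}\,\exp\!\big(\lambda\Delta t\,\Gamma(\eta)\big),
\]
with $\alpha=\sigma^2\Delta t/2$, $\beta=(\mu-\lambda\kappa-\sigma^2/2)\Delta t$, $\theta=-\lambda\Delta t$ exactly as in \eqref{eq:g_series}. Then I would expand the exponential of the jump part as the (everywhere absolutely convergent) power series $\exp(\lambda\Delta t\,\Gamma(\eta))=\sum_{k\ge 0}\tfrac{(\lambda\Delta t)^k}{k!}\Gamma(\eta)^k$, and rewrite each power using the fact that $\Gamma$ is the characteristic function of $\xi$, so that $\Gamma(\eta)^k$ is the characteristic function of a sum of $k$ i.i.d. copies:
\[
\Gamma(\eta)^k=\int_{\mathbb{R}^k}\Big(\prod_{\ell=1}^{k}p(y_\ell)\Big)e^{i\eta Y_k}\,\md y_1\cdots\md y_k,\qquad Y_k=\sum_{\ell=1}^{k}y_\ell .
\]

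Next I would apply the inversion formula \eqref{eq:small_g}, $g(s;\Delta t)=\tfrac{1}{2\pi}\int_{\mathbb{R}}e^{i\eta s}G(\eta;\Delta t)\,\md\eta$, push $\tfrac{1}{2\pi}\int\md\eta$ through the sum over $k$ and through the $k$-fold $y$-integral, and evaluate the remaining one-dimensional Gaussian Fourier integral
\[
\frac{1}{2\pi}\int_{\mathbb{R}}e^{-\alpha\eta^2+i\eta(s+\beta+Y_k)}\,\md\eta=\frac{1}{\sqrt{4\pi\alpha}}\exp\!\Big(-\frac{(s+\beta+Y_k)^2}{4\alpha}\Big),
\]
which is legitimate because $\alpha=\sigma^2\Delta t/2>0$. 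Absorbing the factor $e^{\theta}$ into the exponent gives precisely \eqref{eq:g_series}; specializing to $k=0$, where $Y_0=0$, yields the stated expression for $g(s;\Delta t,0)$.

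The one step that genuinely requires care — and the main obstacle — is justifying the interchange of the $\eta$-integration with the infinite series and the $y$-integrals. Here I would invoke Fubini--Tonelli using the uniform bound $|\Gamma(\eta)|\le 1$ valid for any characteristic function, which gives $|G(\eta;\Delta t)|=e^{\Delta t\,\mathrm{Re}\,\Psi(\eta)}\le e^{-\sigma^2\eta^2\Delta t/2}\in L^1(\mathbb{R})$, so the inverse transform is well defined to begin with. More precisely, the modulus of the generic summand/integrand is dominated by $\tfrac{(\lambda\Delta t)^k}{k!}\big(\prod_{\ell}p(y_\ell)\big)e^{-\alpha\eta^2}$, and since each $p$ integrates to one, the total integral over $(\eta,y_1,\dots,y_k)$ summed over $k$ equals $e^{\lambda\Delta t}\int_{\mathbb{R}}e^{-\alpha\eta^2}\,\md\eta<\infty$; this finiteness licenses every interchange above. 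The remaining manipulations — the Gaussian integral and the bookkeeping of $\alpha,\beta,\theta$ — are routine.
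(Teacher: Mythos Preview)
Your proposal is correct and follows essentially the same route as the paper's proof: start from the inverse Fourier transform of $G(\eta;\Delta t)$, expand $\exp(\lambda\Delta t\,\Gamma(\eta))$ as a power series, write $\Gamma(\eta)^k$ as the $k$-fold integral against $\prod_\ell p(y_\ell)$, interchange via Fubini, and evaluate the remaining Gaussian integral in $\eta$. Your Fubini--Tonelli justification is in fact more explicit than the paper's, which simply cites Fubini and the identity $\int_{\mathbb{R}}e^{-a\phi^2-b\phi}\,\md\phi=\sqrt{\pi/a}\,e^{b^2/4a}$.
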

A proof of of Lemma~\ref{lemma:1} is given in Appendix~\ref{sec:app_g_lemma}.

The infinite series representation in \eqref{eq:g_series} cannot be applied directly for computation as the $k$-th term of the series involves a multiple integral with {\myblue{$\l(\prod_{\ell = 1}^{k} p(y_{\ell})\r)$}}, where $p(y)$ denotes the probability density of $\xi$.
To feasibly evaluate this multiple integral analytically, as noted earlier, our method requires a quite modest level of analytical tractability for the random variable $\xi$. To put it more precisely, we need the associated jump-diffusion model of the form
$
\frac{\md X_t}{X_t^-} = (\mu - \lambda \kappa) \md t + \sigma \, \md W_t\ + \md\l( \sum_{\ell = 1}^{\pi_t} (\xi_{\ell}-1) \r)
$
to yield a closed-form expression for European vanilla options. Arguments in \cite{dangJacksonSues2016}[Section 3.4.1] can be invoked to demonstrate this relationship. Thus, the requirement of analytical tractability for the random variable $\xi$ is not overly stringent, thereby broadening the applicability of our approach in practice.

Next, we demonstrate the applicability  of our approach for two popular distributions for $\xi$ in finance:
when $\xi$ follow a normal distribution \cite{MertonJumps1976}  and
an asymmetric double-exponential distribution \cite{KouOriginal}.
\begin{corollary}
\label{cor:twodis}
For the case $\xi \sim \text{Normal}\l(\widetilde{\mu}, \widetilde{\sigma}^2\r)$
{\myblue{whose}}
PDF is given by \eqref{eq:log_norm_pdf},
the infinite series representation of the conditional density $g(s;\Delta t, \infty)$ given in Lemma~\ref{lemma:1}
is evaluated to
\EQA
\ds
g(s; \Delta t, \infty)
\label{eq:g_log_norm}
&=&
g(s; \Delta t, 0) + \sum_{k=1}^{\infty} \Delta g_k(s; \Delta t),\\
\t{where} \qquad
g(s; \Delta t, 0) &=& \f{\exp\l(\theta - \f{\l(\beta + s + \r)^2}{4 \alpha}\r)}
{\s{4\pi \alpha }},
~~\t{and}~~
\Delta g_k(s; \Delta t) = \f{\l(\lambda \Delta t\r)^{k}}{k!} \, \f{\exp\l(\theta - \f{\l(\beta + s + k \widetilde{\mu} \r)^2}{4 \alpha + 2 k \widetilde{\sigma}^2}\r)}
{\s{4\pi \alpha + 2 \pi k \widetilde{\sigma}^2}},
\nonumber
\ENA
with $\alpha$, $\beta$ and $\theta$ are given in \eqref{eq:g_series}.


For the case $\xi\sim \text{Asym-Double-Exponential}(q_1,\eta_1,\eta_2)$, $\l(q_1 \in(0, 1), \, \eta_1 > 1, \, {\red{\eta_2 > 0}}\r)$
{\myblue{whose}}
PDF given by \eqref{eq:log_exp_pdf},
the infinite series representation of the conditional density $g(s;\Delta t, \infty)$ given in Lemma~\ref{lemma:1}
is evaluated to
$\ds g(s;\Delta t, \infty) = g(s; \Delta t, 0) + \sum_{k=1}^{\infty} \Delta g_k(s; \Delta t)$,
where $g(s; \Delta t, 0) = \f{\exp\l(\theta - \f{\l(\beta + s \r)^2}{4 \alpha}\r)}{\s{4\pi \alpha}}$, and
\begin{align}
\label{eq:g_EJ_double_exp}
\Delta g_k(s; \Delta t) =
&\f{e^{\theta}}{\s{4 \pi \alpha}}
\f{\l(\lambda \Delta t\r)^{k}}{k!}
\l[
\sum_{\ell=1}^k \, Q_1^{k,\ell} \,
\l(\eta_1 \, \sqrt{2\alpha}\r)^{\ell} \,
\e^{\eta_1 \, \l(\beta+s-s' \r) + \eta_1^2 \alpha} \,
\mathrm{Hh}_{\ell-1}\l(\eta_1 \sqrt{2\alpha} + \f{\beta+s-s'}{\sqrt{2\alpha}}\r)
\r. \nonumber\\
& \qquad \qquad \qquad
\l.
+ \sum_{\ell=1}^k \, Q_2^{k,\ell} \,
\l(\eta_2 \, \sqrt{2\alpha}\right)^{\ell} \,
\e^{-\eta_2 \, \l(\beta+s - s' \r) + \eta_2^2 \alpha} \,
\mathrm{Hh}_{\ell-1}\l(\eta_2 \sqrt{2\alpha} - \f{\beta + s - s'}{\sqrt{2\alpha}} \r)
\r].
\end{align}
Here, $\alpha$, $\beta$ and $\theta$ are given in \eqref{eq:g_series}; $Q_1^{k,\ell}$, $Q_2^{k,\ell}$ and $\mathrm{Hh}_{\ell}$ are defined as follows
\begin{align}
\label{eq:PQ}
Q_1^{k,\ell} &= \sum_{i=\ell}^{k-1} \binom{k-\ell-1}{i-\ell} \binom{k}{i}
\l(\f{\eta_1}{\eta_1 + \eta_2}\r)^{i-\ell}\l(\f{\eta_2}{\eta_1 + \eta_2}\r)^{k-i} q_1^{i}q_2^{k-i},
\quad
1 \le \ell \le k-1,
\nonumber\\
Q_2^{k,\ell} &= \sum_{i=\ell}^{k-1} \binom{k-\ell-1}{i-\ell} \binom{k}{i}
\l(\f{\eta_1}{\eta_1 + \eta_2}\r)^{k-i}\l(\f{\eta_2}{\eta_1 + \eta_2}\r)^{i-\ell} q_1^{k-i}q_2^{i},
\quad
1 \le \ell \le k-1,
\end{align}
where $q_1+q_2 = 1$, $Q_1^{k,k}=q_1^k$ and $Q_2^{k,k}=q_2^k$, and
\begin{align}
\label{eq:Hhk}
Hh_{\ell} (x) = \f{1}{\ell !} \int_{x}^{\i} \l(y-x\r)^{\ell} e^{-\f{1}{2} y^2} \md y,
\t{ with }
Hh_{-1} (x) = e^{-x^2/2}, \t{ and }
Hh_{0} (x) = \s{2\pi}\text{NorCDF}(-x).
\end{align}
\end{corollary}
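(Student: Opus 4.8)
The plan is to recognise each summand of the series \eqref{eq:g_series} as a fixed Gaussian density convolved with the law of the partial sum $Y_k=\sum_{\ell=1}^{k}\xi_\ell$, and then to evaluate that convolution explicitly using the special structure of the two jump laws. Writing $\phi_v(\cdot)$ for the $N(0,v)$ density, observe that $\frac{1}{\sqrt{4\pi\alpha}}\exp\!\big(-(\beta+s+Y_k)^2/(4\alpha)\big)=\phi_{2\alpha}(\beta+s+Y_k)$, so the $k$-fold integral in \eqref{eq:g_series} is $\mathbb{E}\!\left[\phi_{2\alpha}(\beta+s+Y_k)\right]=\int_{\mathbb{R}}\phi_{2\alpha}(\beta+s+y)\,f_{Y_k}(y)\,\mathrm{d}y$, with $f_{Y_k}$ the $k$-fold convolution of $p$. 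Hence the $k$-th term of \eqref{eq:g_series} equals $\mathrm{e}^{\theta}\frac{(\lambda\Delta t)^{k}}{k!}\int_{\mathbb{R}}\phi_{2\alpha}(\beta+s+y)\,f_{Y_k}(y)\,\mathrm{d}y$ (the $k=0$ term being $\mathrm{e}^{\theta}\phi_{2\alpha}(\beta+s)=g(s;\Delta t,0)$), and the whole corollary reduces to evaluating this last integral for the normal and the asymmetric double-exponential laws. Convergence of the series and the validity of working term by term are inherited from Lemma~\ref{lemma:1}, the $(\lambda\Delta t)^{k}/k!$ factor giving an immediate summable bound.

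For $\xi\sim\text{Normal}(\widetilde\mu,\widetilde\sigma^{2})$ the step is elementary: by independence $Y_k\sim\text{Normal}(k\widetilde\mu,k\widetilde\sigma^{2})$, so $f_{Y_k}(\cdot)=\phi_{k\widetilde\sigma^{2}}(\cdot-k\widetilde\mu)$; using $\phi_{2\alpha}*\phi_{k\widetilde\sigma^{2}}=\phi_{2\alpha+k\widetilde\sigma^{2}}$ and evaluating at $\beta+s+k\widetilde\mu$ gives $\big(4\pi\alpha+2\pi k\widetilde\sigma^{2}\big)^{-1/2}\exp\!\big(-(\beta+s+k\widetilde\mu)^{2}/(4\alpha+2k\widetilde\sigma^{2})\big)$, which multiplied by $\mathrm{e}^{\theta}(\lambda\Delta t)^{k}/k!$ is exactly $\Delta g_k$ in \eqref{eq:g_log_norm}.

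For $\xi\sim\text{Asym-Double-Exponential}(q_1,\eta_1,\eta_2)$ I would proceed in two sub-steps. First, establish the closed form of the law of $Y_k$, namely $f_{Y_k}(y)=\sum_{\ell=1}^{k}Q_1^{k,\ell}\,\gamma_{\ell,\eta_1}(y)\,\mathbb{1}_{\{y\ge0\}}+\sum_{\ell=1}^{k}Q_2^{k,\ell}\,\gamma_{\ell,\eta_2}(-y)\,\mathbb{1}_{\{y<0\}}$, where $\gamma_{\ell,\eta}(y)=\eta(\eta y)^{\ell-1}\mathrm{e}^{-\eta y}/(\ell-1)!$ is the $\mathrm{Gamma}(\ell,\eta)$ density and $Q_1^{k,\ell},Q_2^{k,\ell}$ are the coefficients of \eqref{eq:PQ}; this is the classical representation of a sum of i.i.d.\ asymmetric double-exponential variables (cf.\ \cite{KouOriginal}), which I would prove by induction on $k$ via $f_{Y_k}=f_{Y_{k-1}}*p$: a one-sided term $\gamma_{\ell,\eta_1}\mathbb{1}_{\{\cdot\ge0\}}$ convolved with the same-sided exponential $q_1\eta_1\mathrm{e}^{-\eta_1\cdot}$ yields $q_1\gamma_{\ell+1,\eta_1}$, whereas convolved with the opposite-sided exponential $q_2\eta_2\mathrm{e}^{\eta_2\cdot}$ it splits — by partial fractions in the Laplace domain — into a combination of $\gamma_{1,\eta_1},\dots,\gamma_{\ell,\eta_1}$ and $\gamma_{1,\eta_2}$ with weights built from $\eta_2/(\eta_1+\eta_2)$ and $\eta_1/(\eta_1+\eta_2)$; iterating produces precisely the binomial sums \eqref{eq:PQ} (with $Q_1^{k,k}=q_1^{k}$, $Q_2^{k,k}=q_2^{k}$ from the all-same-sided paths). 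Second, evaluate the Gaussian--Gamma convolution: with $\sigma^{2}=2\alpha$, completing the square in $-(x-y)^{2}/(2\sigma^{2})-\eta y$ and substituting $u=(y-(x-\sigma^{2}\eta))/\sigma$ turns $\int_{0}^{\infty}\phi_{\sigma^{2}}(x-y)\,\gamma_{\ell,\eta}(y)\,\mathrm{d}y$ into $\frac{(\eta\sqrt{2\alpha})^{\ell}}{\sqrt{4\pi\alpha}}\,\mathrm{e}^{\eta^{2}\alpha-\eta x}\,\mathrm{Hh}_{\ell-1}\!\big(\eta\sqrt{2\alpha}-x/\sqrt{2\alpha}\big)$ directly from the defining integral \eqref{eq:Hhk} of $\mathrm{Hh}_{\ell-1}$; applying this with $x=-(\beta+s)$ for the upward-jump terms and (after $y\mapsto-y$) with $x=\beta+s$ for the downward-jump terms, summing against the $Q_1^{k,\ell},Q_2^{k,\ell}$, and restoring the prefactor $\mathrm{e}^{\theta}(\lambda\Delta t)^{k}/k!$ reproduces \eqref{eq:g_EJ_double_exp}.

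The main obstacle is the first sub-step above: pinning down the density of $Y_k$ with the exact combinatorial coefficients in \eqref{eq:PQ}. The convolution bookkeeping in that induction is intricate, and the cleanest route is to invoke the known representation for sums of asymmetric double-exponential variables and simply verify that its coefficients coincide with \eqref{eq:PQ} (modulo the relabelling $n\leftrightarrow k$, $k\leftrightarrow\ell$, $p\leftrightarrow q_1$). Everything else — the normal case, the reduction of the general term to a Gaussian convolution, and the Gaussian--Gamma identity producing the $\mathrm{Hh}_{\ell-1}$ functions — is routine completion-of-the-square algebra.
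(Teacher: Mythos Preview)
Your proposal is correct and follows essentially the same route as the paper: both recognise the $k$-th term as the Gaussian $\phi_{2\alpha}$ integrated against the density of $Y_k=\sum_{\ell=1}^k\xi_\ell$, invoke Kou's distributional decomposition of $Y_k$ into one-sided Gamma components with weights $Q_1^{k,\ell},Q_2^{k,\ell}$ (the paper cites \cite{kou01} for this, exactly as you suggest), and then reduce the Gaussian--Gamma integral to $\mathrm{Hh}_{\ell-1}$ by completing the square and two changes of variable. Your convolution phrasing and the paper's explicit substitutions are the same computation; the normal case, which the paper omits as ``straightforward,'' is handled by your Gaussian-convolution argument.
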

{\myblue{Here, NorCDF denotes CDF of standard normal distribution $\mathcal{N}(0,1)$.}}
For brevity, we omit a straightforward proof for the log-normal case \eqref{eq:g_log_norm} {\myblue{using Equation~\eqref{eq:g_proof_int}}}.
A proof for the log-double exponential case \eqref{eq:g_EJ_double_exp} is given in Appendix~\ref{sec:app_g}.
For this case, we note that function $Hh_{\ell}(\cdot)$ can be evaluated very efficiently using the standard normal density function
and standard normal distribution function via the three-term recursion \cite{AbramowitzStegun1972}
\EQA
\ell \, Hh_{\ell}(x) = Hh_{\ell-2}(x) - x Hh_{\ell-1}(x),
\quad \ell \ge 1.
\nonumber
\ENA
In the subsequent section, we present a definition of the localized problem
to be solved numerically.

\subsection{Localization and problem statement}
The MV formulation \eqref{eq:recursive_comp}
is posed on an infinite domain.  For the problem statement and convergence analysis of numerical schemes, we define a localized MV portfolio optimsation formulation.
To this end, with $s_{\min}^{\dagger} <  s_{\min} < 0 < s_{\max} < s_{\max}^{\dagger}$, $-b_{\max} < 0 < b_{\max}$, where
$|s_{\min}^{\dagger}|$,  $|s_{\min}|$, $s_{\max}$, $s_{\max}^{\dagger}$, and  $b_{\max}$  are sufficiently large,
we define the following spatial sub-domains:
\begin{linenomath}
\postdisplaypenalty=0
\begin{alignat}{8}
\label{eq:sub_domain_whole}
&\Omega &&= [s_{\min}^{\dagger}, s_{\max}^{\dagger}] \times \l[-b_{\max}, b_{\max}\r],
&\quad \Omega_{\Bcal} &= \l\{
\l(s,b\r) \in \Omega \, \backslash \, \Omega_{s_{\max}} \, \backslash \, \Omega_{s_{\min}}: \,
{\myblue{W_{\t{liq}}\l(s, b\r)}} \le 0 \r\},
\nonumber
\\
&\Omega_{s_{\max}}  &&= \l[s_{\max}, s_{\max}^{\dagger}\r] \times \l[-b_{\max}, b_{\max}\r],
&\quad \Omega_{b_{\max}} &= \l(s_{\min}, s_{\max}\r) \times \l[-b_{\max}e^{r_{b}T}, -b_{\max}\r) \cup \l(b_{\max}, b_{\max}e^{r_{\iota}T}\r],
\nonumber
\\
&\Omega_{s_{\min}}  &&=  \l[s_{\min}^{\dagger}, s_{\min}\r] \times \l[-b_{\max}, b_{\max}\r],
&\quad \Omega_{\myin} &= \Omega \, \backslash \, \Omega_{s_{\max}} \, \backslash \, \Omega_{s_{\min}} \, \backslash \, \Omega_{\Bcal}.
\end{alignat}
\end{linenomath}
We emphasize that we do not actually solve the MV optimization problem in $\Omega_{b_{\max}}$.
However, we may use an approximate value to the solution in $\Omega_{b_{\max}}$,
obtained by means of extrapolation of the computed solution in $\Omega_{\myin}$, to provide
any information required by the MV optimization problem in $\Omega_{\myin}$.
We also define the following sub-domains:
\begin{linenomath}
\postdisplaypenalty=0
\begin{alignat}{8}
&\Omega_{s_{\max}^{\dagger}} &&= \l[s_{\max}^{\dagger}, s_{\max}^{\ddagger} \r] \times \l[-b_{\max}, b_{\max}\r],
\quad
\Omega_{s_{\min}^{\dagger}}  =  \l[s_{\min}^{\ddagger}, s_{\min}^{\dagger}\r] \times \l[-b_{\max}, b_{\max}\r],
\nonumber
\\
&&&\qquad \qquad \text{where } s_{\max}^{\ddagger} = s_{\max} - s_{\min}^{\dagger}
\text{ and } \quad s_{\min}^{\ddagger} = s_{\min} - s_{\max}^{\dagger}.
\label{eq:extra_dom}
\end{alignat}
\end{linenomath}
The solutions within the sub-domains $\Omega_{s_{\min}^{\dagger}}$ and $\Omega_{s_{\max}^{\dagger}}$ are not required for our purposes.
These sub-domains are introduced to ensure the well-defined computation of the conditional probability density function $g(\cdot)$ in \eqref{eq:recursive_c_comp_int_1} for the convolution integral \eqref{eq:recursive_c_comp_int_1} in the MV optimization problem within $\Omega_{\myin}$. To simplify our discussion, we will adopt a zero-padding convention going forward. This convention assumes that the value functions within these sub-domains are zero for all time $t$, and we will exclude these sub-domains from further discussions.


Due to rebalancing, the intervention operator $\mathcal{M}(\cdot)$ for $\Omega_{\myin}$,
defined in \eqref{eq:Operator_M}, may require evaluating a candidate value
{\myblue{at a point having $s^+ = \ln(\max( W_{\t{liq}}(s,b)- c , \, e^{\winf}))$, and $s^+$ could be}}
outside $[s_{\min}^{\dagger}, s_{\max}^{\dagger}]$, if $\winf < s^{\dagger}_{\min}$.
{\myblue{Therefore, with $|s^{\dagger}_{\min}|$ selected sufficiently large, we assume $\winf = s^{\dagger}_{\min}$.}}

We now present equations for spatial sub-domains defined in \eqref{eq:sub_domain_whole}.
We note that boundary conditions for $s \to -\infty$ and $s \to \infty$
are obtained by relevant asymptotic forms $e^s \to 0$ and $e^s \to \infty$, respectively,
similar to \cite{DangForsyth2014}. This is detailed below.
\begin{itemize}
\item For  $(s, b, T) \in  \Omega \times\{T\}$,
we apply the terminal condition \eqref{eq:recursive_b_comp_initial}
\EQ
\label{eq:recursive_b_comp_initial*}
v(s, b, T) =  {\myblue{\l(W_{\t{liq}}(s,b) - \f{\gamma}{2} \r)^2}}.
\EN

\item 
{\myblue{For $(s, b, t_m) \in  \Omega \times \Tcal_M$, $m = M-1, \ldots, 0$,}}
the intervention result \eqref{eq:recursive_b_comp} is given by
\EQ
\label{eq:recursive_b_comp_dom}
v \l(s,b,t_m \r) ~=~  \min\left\{v \l(s,b,t_{m}^+\r), \inf_{c \in \mathcal{Z}} \mathcal{M}(c)~v \l(s,b,t_{m}^+\r)\right\},
\EN
where the intervention $\mathcal{M}(\cdot)$ is defined in \eqref{eq:Operator_M}.

\item For $(s, b, t_m^-) \in  \Omega \times\{t_m^-\}$, $m = M, \ldots, 1$,
settlement of interest \eqref{eq:recursive_interest} is enforced by
\EQ
v \l(s,b,\tn^-\r) ~=~  v \l(s,be^{R(b) \Delta t},\tn\r),
\quad m = M, \ldots, 1, \quad
\text{ and $v \l(s,\cdot,\tn\r)$ is given in \eqref{eq:recursive_b_comp_dom}. }
\label{eq:recursive_interest*}
\EN

\item For $(s, b, t_{m}^+) \in  \Omega_{b_{\max}} \times\{t_{m}^+\}$, where $m = M, \ldots, 1$,
we impose the boundary condition
\EQ
\label{eq:bmax}
{\red{
v \l(s, b, t_m^+\r)
= \l(\f{b}{b_{\max}}\r)^2v \l(s, \sgn(b) b_{\max},t_m^+\r).}}
\EN

\item 
{\myblue{
For $(s, b, t_{m-1}^+) \in  \Omega_{s_{\min}} \times\{t_{m-1}^+\}$, where $t_{m-1} \in \Tcal_M$,
from \eqref{eq:recursive_b_comp_initial*}, we assume that $v(s, b, t) \approx A_0(t) b^2$ for some unknown function $A_0(t)$,
which mimics asymptotic behaviour of the value function as $s \to -\infty$ (or equivalently, $e^z \to  0$).}}
Substituting this asymptotic form into the integral~\eqref{eq:recursive_c_comp_int} gives
{\myblue{the boundary condition}}
    \EQA
    \label{eq:left_padding}
    v(s, b, t_{m-1}^+) =  A_0(t_{m}^-) b^{2} \int_{-\infty}^{\infty}  g(s- s'; \Delta t)~ \md s' = v(s, b, t_{m}^-),
    \ENA
    {\myblue{where $v(s, b, t_{m}^-)$ is given by \eqref{eq:recursive_interest*}.}}

\item 
{\myblue{
For $(s, b, t_{m-1}^+) \in  \Omega_{s_{\max}} \times\{t_{m-1}^+\}$, where $t_{m-1} \in \Tcal_M$,
from \eqref{eq:recursive_b_comp_initial*}, for fixed $b$, we assume that $v(z, b, t) \approx A_1(t) e^{2s}$ for some unknown function $A_1(t)$, which mimics asymptotic behaviour of the value function as $s \to \infty$ (or equivalently, $e^z \to  \infty$).}}
We substitute this asymptotic form into the integral~\eqref{eq:recursive_c_comp_int},
noting the infinite series representation of $g(\cdot;\Delta t)$ given Lemma~\ref{lemma:1},
and obtain the correspoding boundary condition:
\EQ
\label{eq:right_padding}
v \l(s, b, t_{m -1}^+\r) =
v(s, b,t_m^-) ~e^{\l(\sigma^2 + 2\mu + \lambda \kappa_2\r)\Delta t},
\quad
\kappa_2 = \Ebb \l[\l(\e^{\xi}-1\r)^2\r],
\EN
{\myblue{where $v(s, b, t_{m}^-)$ is given by}} \eqref{eq:recursive_interest*}.
For a proof, see  Appendix~\ref{sec:v_u_s_max}.

\item
For $(s, b, t_{m-1}^+) \in  \Omega_{\myin} \times\{t_{m-1}^+\}$, where $t_{m-1} \in \Tcal_M$, from the convolution integral $\eqref{eq:recursive_c_comp_int_1}$,
we have
\EQ
\label{eq:omega_in}
v\l(s,b,t_{m-1}^+\r) = \int_{s_{\min}^{\dagger}}^{s_{\max}^{\dagger}} v \l(s',b,\tn^-\r) g(s - s'; \Delta t)~ \md s'.
\EN
where the terminal condition $v \l(s',b,\tn^-\r)$ is given by \eqref{eq:recursive_interest*}.
The conditional density $g(\cdot; \Delta t)$ is given by the infinite series in \eqref{eq:g_series} (Lemma~\eqref{lemma:1}),
and is defined on $[s_{\min}^{\ddagger}, s_{\max}^{\ddagger}]$.

\end{itemize}
In Definition~\ref{def:glwb} below, we formally define the MV portfolio optimization problem
\begin{definition}[Localized MV portfolio optimization problem]
\label{def:glwb}
The MV portfolio optimization problem with the set of rebalancing times
$\Tcal_M$ defined in \eqref{eq:T_N},  and dynamics \eqref{eq:FX-HHW.1} with the PDF {\myblue{$p(y)$}} given by \eqref{eq:log_norm_pdf} or \eqref{eq:log_exp_pdf}, is defined in $\Omega \times {\myblue{\Tcal_M}} \cup \l\{t_M\r\}$ as follows.

At each $t_{m-1} \in \Tcal_M$, the solution to the MV portfolio optimization problem $v(s, b, t_{m-1})$ {\myblue{given by}} \eqref{eq:recursive_b_comp_dom},
{\myblue{where $v(s, b, t_{m-1}^+)$ satisfies}} (i) the integral \eqref{eq:omega_in} in $\Omega_{\myin}\times\{t_{m-1}^+\}$,
(ii) the boundary conditions \eqref{eq:left_padding}, \eqref{eq:right_padding}, and \eqref{eq:bmax} in
$\l\{\Omega_{s_{\min}}, \Omega_{s_{\max}}, \Omega_{b_{\max}}\r\} \times\{t_{m-1}^+\}$, respectively, and
(iii) subject to the terminal condition \eqref{eq:recursive_b_comp_initial*} in $\Omega \times\{t_M\}$,
{\myblue{with the settlement of interest subject to \eqref{eq:recursive_interest*} in $\Omega \times\{t_m^-\}$}}.

\end{definition}

We introduce a result on uniform continuity of the solution to the MV portfolio optimization.
\begin{proposition}
\label{prp:uniform}
The solution $v(s, b, t_{m})$ to the MV portfolio optimization in Definition~\ref{def:glwb} is uniformly continuous
within each sub-domain $\Omega_{\myin} \times \{t_m\}$, $m = M, \ldots, 0$.
\end{proposition}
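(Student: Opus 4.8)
The plan is to prove uniform continuity by backward induction on $m = M, M-1, \ldots, 0$, tracking how each of the operations in the recursive relationship \eqref{eq:recursive_comp} preserves (uniform) continuity and boundedness on the relevant compact sub-domains. The key structural observation is that all the spatial sets involved ($\Omega$, $\Omega_{\myin}$, etc.) are bounded, so on each of them continuity and uniform continuity coincide for the value function provided we also carry along a uniform bound; hence the induction hypothesis I would actually propagate is: \emph{$v(\cdot,\cdot,t)$ is bounded and (uniformly) continuous on $\Omega$ at each of the relevant time-slices}.

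First, the base case: at $t_M = T$, $v(s,b,T) = (W_{\t{liq}}(s,b) - \gamma/2)^2 = (e^s + b - \delta - \gamma/2)^2$, which is smooth, hence uniformly continuous on the compact set $\Omega$. Next I would walk through one rebalancing interval from $t_m^-$ back to $t_{m-1}^+$ and then to $t_{m-1}$, verifying three preservation steps. (a) \emph{Settlement of interest} \eqref{eq:recursive_interest*}: $v(s,b,t_m^-) = v(s, b e^{R(b)\Delta t}, t_m)$; since $b \mapsto b e^{R(b)\Delta t}$ is a (piecewise-defined but globally continuous, and in fact Lipschitz) map of $b$ with bounded image, composition preserves uniform continuity. (b) \emph{Convolution} \eqref{eq:omega_in} on $\Omega_{\myin}$ (and the analogous explicit boundary formulas \eqref{eq:left_padding}, \eqref{eq:right_padding} on $\Omega_{s_{\min}}, \Omega_{s_{\max}}$, which are just $v(s,b,t_m^-)$ times a bounded continuous factor): here I would use that $g(\cdot;\Delta t)$, by the series representation of Lemma~\ref{lemma:1} and Corollary~\ref{cor:twodis}, is a bounded, integrable, and indeed continuous (in fact $C^\infty$, being a convergent series of Gaussians / Gaussian-times-$\mathrm{Hh}$ terms) probability density on $[s_{\min}^{\ddagger}, s_{\max}^{\ddagger}]$ with total mass $\le 1$. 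Then $s \mapsto \int v(s',b,t_m^-) g(s-s';\Delta t)\,\md s'$ is continuous in $s$ by continuity of translation in $L^1$ (or directly by dominated convergence, using the uniform bound on $v$), uniformly in the bounded parameter $b$; joint uniform continuity in $(s,b)$ then follows by combining with the uniform continuity of $v(\cdot,\cdot,t_m^-)$ in the $b$-slot and the compactness of the domain. The boundary conditions \eqref{eq:bmax} on $\Omega_{b_{\max}}$ are handled the same way — multiplication by the continuous bounded factor $(b/b_{\max})^2$.

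The main obstacle, and the step I would spend the most care on, is (c): the \emph{intervention} step \eqref{eq:recursive_b_comp_dom}, $v(s,b,t_m) = \min\{ v(s,b,t_m^+),\ \inf_{c\in\Zcal}\Mcal(c)v(s,b,t_m^+)\}$. Two issues arise. The $\min$ of two uniformly continuous functions is uniformly continuous, so the real work is the term $\inf_{c\in\Zcal}\Mcal(c)v(s,b,t_m^+)$, where $\Mcal(c)v(s,b,t_m^+) = v\big(\ln(\max(e^s+b-c-\delta,\,\einf)),\, c,\, t_m^+\big)$. One must show this infimum is (i) attained / well-behaved and (ii) uniformly continuous in $(s,b)$. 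For (ii) the standard device is: $|\inf_c F(x_1,c) - \inf_c F(x_2,c)| \le \sup_c |F(x_1,c) - F(x_2,c)|$, so it suffices that $(s,b)\mapsto \Mcal(c)v(s,b,t_m^+)$ is uniformly continuous \emph{uniformly in} $c$. Here one uses that the map $(s,b,c) \mapsto \big(\ln(\max(e^s+b-c-\delta,\einf)),\,c\big)$ is uniformly continuous on the (bounded) admissible set — the $\max$ with $\einf$ is a Lipschitz clamp, $\ln$ is Lipschitz on $[\einf,\cdot]$ away from zero since its argument is bounded below by $e^{\winf}>0$, and $e^s+b$ is Lipschitz on the compact $(s,b)$-region — composed with the (uniformly continuous, by induction hypothesis) $v(\cdot,\cdot,t_m^+)$. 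A subtlety I would flag explicitly is the dependence of the admissible set $\Zcal$ on the state $(s,b)$ through the solvency/leverage constraints \eqref{eq:Zcal_def}: on $\Omega_{\myin}$ one is in the solvency region $\Ncal$, and the constraint set $\{c : S_{t_m^+}\ge\winf,\ \eqref{eq:leverage}\}$ varies continuously (it is a closed interval in $c$ whose endpoints depend continuously on $(s,b)$), so a short argument — e.g. reparametrizing $c$ over a fixed interval via the continuously varying endpoints, or invoking continuity of the value of a parametric optimization with continuously-varying constraint correspondence — closes the gap. Finally, assembling these three preservation steps and running the induction from $m=M$ down to $m=0$ yields uniform continuity of $v(\cdot,\cdot,t_m)$ on $\Omega_{\myin}\times\{t_m\}$ for every $m$, which is the claim. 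I would end by remarking that boundedness is propagated in parallel at every step (squared affine terminal data is bounded on $\Omega$; convolution against a sub-probability density, multiplication by bounded factors, interest accrual over bounded $b$, and $\min$/$\inf$ all preserve uniform bounds), so the induction hypothesis is genuinely maintained.
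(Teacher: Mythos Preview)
Your proposal is correct and follows essentially the same approach as the paper: backward induction on $m$, verifying that each of the three operations (interest settlement, convolution against $g$, and intervention/infimum) preserves uniform continuity, starting from the uniformly continuous terminal condition on the bounded domain $\Omega$. The paper's proof is in fact terser than yours---it simply cites \cite[Lemma~2.2]{Guo2009} for the intervention step rather than giving your explicit $|\inf_c F(x_1,c) - \inf_c F(x_2,c)| \le \sup_c |F(x_1,c) - F(x_2,c)|$ argument and your careful handling of the state-dependent admissible set $\mathcal{Z}$.
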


\begin{proof}
This proposition can be proved using mathematical induction on $m$. For brevity, we outline key details below.
We first note that the domain $\Omega$ is bounded and $T$ is finite.
We observe that if $v(s, b, t)$ is a uniformly continuous function,
then $\ds \inf_{c\in \mathcal{Z}} \mathcal{M}(c) v(s, b, t)$, where $\mathcal{M}(\cdot)$ defined in \eqref{eq:Operator_M},
is also uniformly continuous \cite[Lemma~2.2]{Guo2009}.
As such, $\min\{v(s, b, t), \inf_{c\in \mathcal{Z}} \mathcal{M}(c) v(s, b, t)\}$ is also uniformly continuous
since $\Omega$ is bounded.
Therefore, it follows that if $v(s, b, t_m^+), m = M-1,\ldots,0$, is uniformly continuous then
the intervention result $v(s, b, t_m)$ obtained in \eqref{eq:recursive_b_comp_dom} is also uniformly
continuous.  Next, if $v(s, b, t_m)$, $m = M,\ldots,1$, is uniformly continuous, then the interest settlement result
$v(s, b, t_m^-)$ defined in \eqref{eq:recursive_interest*} is also uniformly continuous.
The other key step is to show that, if $v(s, b, t_{m}^-)$, $m = M, \ldots, 1$, is uniformly continuous, then
the solution $v(s, b, t_{m-1}^+)$ for $(s, b) \in \Omega_{\myin}$ given by the convolution integral \eqref{eq:omega_in} is also uniformly continuous.
Combining these above three steps with the fact that the initial condition $v(s, b,  t_M)$ given in \eqref{eq:recursive_b_comp_initial*}
is uniformly continuous in $(s, b) \in \Omega$, with $\Omega$ a bounded domain, gives the desired result.
\end{proof}
We conclude this section by emphasizing that
the value function may not be continuous across $s_{\min}$ and $s_{\max}$.
The interior domain $\Omega_{\myin}\times \{t_{m}\}$, $m = M-1, \ldots, 0$,
is the target region where provable pointwise convergence of the proposed numerical method is investigated,
which relies on Proposition~\ref{prp:uniform}.

\section{Numerical methods}
\label{sc:num}
Given the closed-form expressions of $g\l(s-s'; \Delta t \r)$,
the convolution integral \eqref{eq:omega_in} is approximated by a discrete convolution which can be efficiently computed
via FFTs.
For our scheme, the intervals $[s^{\dagger}_{\min}, s_{\min}]$ and $[s_{\max}, s^{\dagger}_{\max}]$
also serve as padding areas for nodes in $\Omega_{\myin}$.
Without loss of generality, for convenience, we assume that  $|s_{\min}|$ and $s_{\max}$ are chosen sufficiently
large {\myblue{with}}
\EQA
\label{eq:w_choice_green_jump_form}
s^{\dagger}_{\min} = s_{\min} - \frac{s_{\max} - s_{\min}}{2},
~~~\text{and}~~~
s^{\dagger}_{\max} =  s_{\max} + \frac{s_{\max} - s_{\min}}{2}.
\ENA
With this in mind, $s_{\min}^{\ddagger}$ and $s_{\max}^{\ddagger}$, defined
in \eqref{eq:extra_dom}, are given by
\[
s_{\min}^{\ddagger} = s^{\dagger}_{\min} - s_{\max} =  -\frac{3}{2}\l(s_{\max} - s_{\min}\r),
~~~\text{and}~~~
s_{\max}^{\ddagger} = s^{\dagger}_{\max} - s_{\min}  = \frac{3}{2}\l(s_{\max} - s_{\min}\r).
\]

\subsection{Discretization}
We discretize MV portfolio optimization problem defined in Defn.~\ref{def:glwb} on the localized domain $\Omega$ as follows.
\begin{itemize}
\setlength\itemsep{0.05em}
\item[(i)]
We denote by $N$ (resp.\ $N^{\dagger}$ and $N^{\ddagger}$ ) the number of intervals of a uniform partition of $[s_{\min}, s_{\max}]$
(resp.\ $[s_{\min}^{\dagger}, s_{\max}^{\dagger}]$ and $[s_{\min}^{\ddagger}, s_{\max}^{\ddagger}]$).
For convenience, we typically choose $N^{\dagger} = 2N$ and $N^{\ddagger} = 3N$ so that only one set of $s$-coordinates is needed.
We use an equally spaced partition in the $s$-direction, denoted by
$\{s_n\}$, where
\begin{align}
\label{eq:grid_w}
    s_n &= \hat{s}_0 + n\Delta s;
    ~~
    s = -N^{\ddagger}/2, \ldots, N^{\ddagger}/2, ~~\text{where }
    \hat{s}_0 = \frac{s_{\min} + s_{\max}}{2}= \frac{s^{\dagger}_{\min} + s^{\dagger}_{\max}}{2} = \frac{s^{\ddagger}_{\min} + s^{\ddagger}_{\max}}{2},
    \nonumber
    \\
    \text{and } \Delta s &= \frac{s_{\max} - s_{\min}}{N} ~=~ \frac{ s^{\dagger}_{\max} - s^{\dagger}_{\min}}{N^{\dagger}}
    ~=~ \frac{ s^{\ddagger}_{\max} - s^{\ddagger}_{\min}}{N^{\ddagger}}.
\end{align}
\item[(ii)]
We use an unequally spaced partition in
the $b$-direction, denoted by $\{b_j\}$, where $j = 0,\ldots, J$,
with $b_0 = b_{\min}$,
$b_J = b_{\max}$,
$\Delta b_{\max} = \max_{0 \leq j \leq J-1} \left(b_{j+1} - b_j\right)$, and
$\Delta b_{\min} = \max_{0 \leq j \leq J-1} \left(b_{j+1} - b_j\right)$.
\end{itemize}
We emphasize that no timestepping is required for the interval $[t_{m-1}^+, t_m^-]$, {\myblue{$t_{m-1} \in \Tcal_M$.}}
As noted earlier, $\Delta t = T/M$ is kept constant.
We assume that there exists a discretization parameter $h>0$ such that
\EQA
\label{eq:dis_parameter}
\Delta s = C_1 h,
\quad
\Delta b_{\max} = C_2 h,
\quad
\Delta b_{\min} = C_2'h,
\ENA
where the positive constants $C_1$, $C_2$, $C_2'$ are independent of $h$.
For convenience, we occasionally use $\x_{n, j}^m \equiv (s_n, b_j, t_m)$ to refer to the reference gridpoint $(s_n, b_j, t_m)$,
$n = -N^{\dagger}/2, \ldots, N^{\dagger}/2$,
$j =  0, \ldots, J$, $m = M, \ldots, 0$.
{\myblue{Nodes ${\bf{x}}_{n, j}^m$ have
(i) $n = -N^{\dagger}/2, \ldots, -N/2$, in $\Omega_{s_{\min}}$,
(ii) $n = -N/2+1, \ldots N/2-1$, in $\Omega_{\myin}$,
(iii) $n = N/2, \ldots N^{\dagger}/2$, in  $\Omega_{s_{\max}}$.
and
(iv) $n = -N^{\ddagger}/2+1 \ldots -N^{\dagger}/2-1$ and
$n = N^{\dagger}/2+1 \ldots N^{\ddagger}/2-1$, in padding sub-domains.
}}

For $t_m\in \Tcal_M$, we denote by $v(s_n, b_j, t)$
the exact solution at the reference node $(s_n, b_j, t)$, where $t = \{t_m^{\pm}, t_m\}$,
and by $v_h(s, b, t)$ the approximate solution at an arbitrary point $(s, b, t)$
obtained using the discretization parameter $h$.
We refer to the approximate solution at the reference node $(s_n, b_j, t)$, where $t = \{t_m^{\pm}, t_m\}$,
as $v_{n, j}^{m\pm} \equiv v_h(s_n, b_j, t_m^{\pm})$ and  $v_{n, j}^{m} \equiv v_h(s_n, b_j, t_m)$.
In the event that we need to evaluate $v_h$ at a point other than a node on the computational gridpoint,
linear interpolation is used.
We define by  $\Zcal_h$ the discrete set of admissible impulse values defined as follows
\EQA
\label{eq:Z_h_dis}
\Zcal_h = \l\{b_0, b_1, \ldots, b_J\r\} \cap \Zcal.
\ENA
{\myblue{where $\mathcal{Z}$ is defined in \eqref{eq:Zcal_def}, and $h$ is the discretization parameter.}}
With $b^+\in \Zcal_h$ being an impulse value (a control),
applying $b^+$ at the reference spatial node $(s_n, b_j)$ results in
\EQ
\label{eq:sbplus}
s^+_n = s^+(s_n, b_j, b^+) \text{ computed by \eqref{eq:S+_B+}} , \quad b_j^+ = b^+(s_n, b_j, b^+) =  b^+.
\EN
%
For the special case $t_M$, as discussed earlier, we only have interest rate payment, but no rebalancing,
and therefore, only $v_{n, j}^{M}$ and $v_{n, j}^{M-}$ are used.

{\myblue{\subsection{Numerical schemes} }}
{\myblue{For convenience, we define $\mathbb{N} = \l\{-N/2+1, \ldots, N/2-1\r\}$, $\mathbb{N}^{\dagger} = \l\{-N^{\dagger}/2, \ldots, N^{\dagger}/2\r\}$ and $\mathbb{J} = \l\{0, \ldots, J\r\}$.
Backwardly, over the time interval $\l[t_{m-1}, t_{m}\r]$, $t_{m-1} \in \Tcal_M$, there are three key components solving the MV optimisation problem, namely
(i) the interest settlement over $\l[t_m^-, t_m\r]$ as given in \eqref{eq:recursive_interest*};
(ii) the time advancement from $t_{m}^-$ to $t_{m-1}^+$,
as captured by \eqref{eq:left_padding}-\eqref{eq:omega_in}, and
(iii) the intervention action over $[t_{m-1}, t_{m-1}^+]$
as given in \eqref{eq:recursive_b_comp_dom}. 
We now propose the numerical schemes for these steps.
}}

For $(s_n, b_j, t_M) \in \Omega \times \{T\}$, we impose the terminal
condition \eqref{eq:recursive_b_comp_initial*} by
\EQ
\label{eq:terminal}
v_{n, j}^{M} =  {\myblue{
\l(W_{\t{liq}}(s_n, b_j) - \f{\gamma}{2} \r)^2,
\quad
n \in \mathbb{N}^{\dagger},
~~
j \in \mathbb{J}.}}
\EN
{\myblue{
For imposing the intervention action \eqref{eq:recursive_b_comp_dom}, we solve the optimization problem
\EQA
\label{eq:Intervention_Operator_glwb}
{\myblue{ v_{n,j}^{m} = \min\l\{ v_{n,j}^{m+},
\min_{b^+ \in \mathcal{Z}_h} v_h(s_n^+, b^+, t_m^+)\r\},}}
\quad
s_n^+ = s^+(s_n, b_j, b^+), \quad
n \in \mathbb{N}^{\dagger}, ~
j \in \mathbb{J}.
\ENA
Here, $v_h(s^+_n, b^+, t_m^+)$ is the approximate solution
to the exact solution  $v(s_n^+, b^+, t_m^+)$, where $b^+ \in \mathcal{Z}_h$
and $s^+_n = s^+(s_n, b_j, b^+)$ is given by \eqref{eq:S+_B+}.
The approximation $v_h(s^+_n, b^+, t_m^+)$ is computed by linear interpolation as follows
\EQA
\label{eq:vtil}
    v_h(s_n^+, b^+, t_m^+) =
    \mathcal{I}\left\{v^{m+} \right\}
    \left(s^+_n, b^+\right),
    \quad
    n \in \mathbb{N}^{\dagger}, ~~
    j \in \mathbb{J}.
\ENA
Here, $\mathcal{I}\left\{v^{m+} \right\}(\cdot)$ is a linear interpolation operator acting on the time-$t_m^+$ discrete solutions
$\l\{s_q, b_p,  v_{q, p}^{m+}\r\}$, $q\in \mathbb{N}^{\dagger}$ and $p \in \mathbb{J}$.
We note that since $b^+ \in \l\{b_0, b_1, \ldots, b_J\r\}$,
\eqref{eq:vtil} boils down to a single dimensional interpolation along the $s$-dimension.
}}
\begin{remark}[Attainability of local minima]
\label{rm:sup_exist}
We determine the infimum of the intervention operator in \eqref{eq:recursive_b_comp} by a linear search over the discrete set of controls $\mathcal{Z}_h$ in \eqref{eq:Z_h_dis}, that is, an exhaustive search through all admissible controls. As mentioned in \cite{dang2014continuous}, using this approach,
we can guarantee obtain the global minimum as $h \rightarrow 0$.
\end{remark}

For the settlement of interest \eqref{eq:recursive_interest*}, linear interpolation/extrapolation is applied to compute $v_{n,j}^{m-}$ as follows.
\EQ
v_{n,j}^{m-} =  \mathcal{I}\left\{v^{m}_n \right\}
    \left(b_je^{R(b_j) \Delta t} \right), \quad
    n \in \mathbb{N}^{\dagger}, ~~
    j \in \mathbb{J}
\label{eq:recursive_interest**}
\EN
Here $\mathcal{I}\left\{v^{m}_n \right\}(\cdot)$ be
linear interpolation/extrapolation operator acting on the time-$t_m$ discrete solutions
{\myblue{
$\l\{b_q, v_{n, q}^{m}\r\}$, $q\in \mathbb{J}$, where $v_{n, q}^{m}$ are given by
\eqref{eq:terminal} at $t_m = T$ and by \eqref{eq:Intervention_Operator_glwb} at $t_m, m = M-1, \ldots,1$.
Note that when $\l(s_n, b\r) \in \Omega_{b_{\max}}$, $\mathcal{I}\left\{v^{m}_n \right\}(b)$ becomes
a linear extrapolation operator which imposes the boundary condition \eqref{eq:bmax}.  That is,
\EQA
\label{eq:v_h_bmax}
v(s_n, b, t_m) =  \l(\f{b}{b_{J}}\r)^2v \l(s_n, \sgn(b) b_{J},t_m\r), \qquad
(s_n, b, t_m) \in \Omega_{b_{\max}} \times \l\{t_m\r\}, \, m = M,\ldots,1.
\ENA

For the time advancement of $(s_n, b_j, t_{m-1}^+) \in \Omega_{s_{\min}} \cup \Omega_{s_{\myin}} \cup \Omega_{s_{\max}} \times \{t_{m-1}^+\}$, $t_{m-1} \in \Tcal_M$. The boundary conditions, for $\Omega_{s_{\min}} \cup \Omega_{s_{\max}} \times \{t_{m-1}^+\}$ as \eqref{eq:left_padding} and \eqref{eq:right_padding}, can be imposed by
\EQA
\label{eq:left_padding*}
v_{n,j}^{(m-1)+} &=& v_{n,j}^{m-},
\qquad\qquad\qquad\quad
n=-N^{\dagger}/2, \ldots, -N/2,
~~
j \in \mathbb{J},
{\text{ and $v_{n,j}^{m-}$ is given in \eqref{eq:recursive_interest**}, }}
\\
\label{eq:right_padding*}
v_{n,j}^{(m-1)+} &=& e^{\l(\sigma^2 + 2\mu + \lambda \kappa_2\r)\Delta t} v_{n,j}^{m-},
\quad
n= N/2, \ldots, N^{\dagger}/2,
~
j \in \mathbb{J},
{\text{ and $v_{n,j}^{m-}$ is given in \eqref{eq:recursive_interest**}}}.
\ENA
In  $\Omega_{\myin}$,}}
we tackle the convolution integral in \eqref{eq:omega_in},  where {\myblue{$j \in \mathbb{J}$}} is fixed.
For simplicity, we adopt the following notational convention: with
{\myblue{$n \in \mathbb{N}$ and
$l \in \mathbb{N}^{\dagger}$,}}
we let ${\myblue{g_{n-l}(\Delta t, \infty)}} = g(s_n - s_{l}; \Delta t, \infty)$, where $g(\cdot)$ is given by the infinite series \eqref{eq:g_series}.
We also denote by {\myblue{$g_{n-l}(\Delta t, K)$}} an approximation to {\myblue{$g_{n-l}(\Delta,\infty)$}} using the first $K$ terms
of the  infinite series \eqref{eq:g_series}.
Applying the composite trapezoidal rule to approximate the convolution integral \eqref{eq:omega_in} gives the approximation
in the form of a discrete convolution as follows
\EQA
\label{eq:scheme}
v_{n,j}^{(m-1)+} = \Delta s {\myblue{\sum_{l=-N^{\dagger}/2}^{N^{\dagger}/2} }} \omega_{l}\,
{\myblue{ g_{n - l}(\Delta t, K) }} \,
v_{l, j}^{m-}, \quad
{\myblue{
n \in \mathbb{N},~
j \in \mathbb{J}.
}}
\ENA
where $v_{l, j}^{m-}$ are given in \eqref{eq:recursive_interest**}
{\myblue{and $\omega_{l} = 1$, $l = -N^{\dagger}/2 +1, \ldots, N^{\dagger}/2-1$, and $\omega_{-N^{\dagger}/2} = \omega_{N^{\dagger}/2} = 1/2$.}}

\begin{remark}[Monotonicity]
\label{rm:mono}
We highlight that the conditional density {\myblue{$g_{n-l}(\Delta t, \i)$}} given by the infinite series \eqref{eq:g_series}
is defined and non-negative for all {\myblue{$n \in \mathbb{N}$ and $l \in \mathbb{N}^{\dagger}$}}
(or, alternatively, for all $s_n \in (s_{\min}, s_{\max})$ and $s_{l} \in [s_{\min}^{\dagger}, s_{\max}^{\dagger}]${\myblue{).}}
Therefore, scheme \eqref{eq:scheme} is monotone.

We highlight that for computational purposes, {\myblue{$g_{n-l}(\Delta t, \infty)$}}, given by the infinite series \eqref{eq:g_series},
is truncated to {\myblue{$g_{n-l}(\Delta t, K)$}}. However, since each term of the series is non-negative, this truncation does not result
in loss of monotonicity, which is a key advantage of the proposed approach.
\end{remark}
As $K \to \infty$, there is no loss of information in the discrete convolution \eqref{eq:scheme}.
For a finite $K$, however, there is an error, namely {\myblue{$\l| g_{n-l}(\Delta t, \infty) - g_{n-l}(\Delta t, K) \r|$}},
due to the use of a truncated Taylor series.
Specifically, this truncation error can be bounded as follows:
{\myblue{
\begin{align}
\l| g_{n-l}(\Delta t, \infty) - g_{n-l}(\Delta t; K) \r| &=
\l| \sum_{k=K+1}^{\i} \f{\l(\lambda\Delta t\r)^k}{k!}
\int_{-\infty}^{\infty}
e^{- \alpha \eta^2 +\l(\beta +s_n - s_{l}\r) \l(i \eta \r) + \theta} \,
\l(\Gamma\l( \eta \r) \r)^k ~ \md \eta \r|
\nonumber\\
&
{\myblue{
~{\buildrel (\text{i}) \over \le}~
\f{\l(\lambda\Delta t\r)^{K+1}}{(K+1)!} \,
g_{n-l}(\Delta t, \infty)
}}
~{\buildrel (\text{ii}) \over \le}~
\f{\l(\lambda\Delta t\r)^{K+1}}{(K+1)!}\,
\f{1}{\s{2 \pi \sigma^2 \Delta t}}.
\label{eq:Kbound}
\end{align}
Here, in (i), $\l| \l(\Gamma\l( \eta \r) \r)^{K+1} \r|
\le \l(\int_{-\infty}^{\infty} p(y)~\l|e^{i \eta y}\r|~dy\r)^{K+1}
\le 1$;
in (ii),  $g_{n-l}(\Delta t, \infty) \le
\f{e^{\theta}}{\s{4\pi \alpha}} \sum\limits_{k=0}^{\infty} \f{\l(\lambda \Delta t\r)^{k}}{k!}
    = \f{1}{\s{2 \pi \sigma^2 \Delta t}}$.  }}
Therefore, from \eqref{eq:Kbound}, as $K \rightarrow \i$, we have $\f{\l(\lambda\Delta t\r)^{K+1}}{(K+1)!} \rightarrow 0$,
resulting in no loss of information. For a given $\epsilon > 0$, we can choose $K$ such that
the error {\myblue{$\l| g_{n-l}(\Delta t, \infty) - g_{n-l}(\Delta t, K) \r| < \epsilon$}}, for all
{\myblue{$n \in \mathbb{N}$ and $l \in \mathbb{N}^{\dagger}$}}.
This can be achieved by enforcing
\EQA
\label{eq:K_Oh}
\f{\l(\lambda\Delta t\r)^{K+1}}{(K+1)!} \le \epsilon \s{2 \pi \sigma^2 \Delta t}.
\ENA
It is straightforward to see that, if $\epsilon = \Ocal(h)$, then $K = \Ocal(\ln(h^{-1}))$, as $h \rightarrow 0$.
For a given $\epsilon$, we denote by $K_{\epsilon}$ be the smallest $K$ values that satisfies \eqref{eq:K_Oh}.
We then have
\EQ
\label{eq:sum_g}
0< g_{n-l}(\Delta t, \infty) - g_{n-l}(\Delta t, K_{\epsilon}) < \epsilon,
\quad
{\myblue{n \in \mathbb{N},~ l \in \mathbb{N}^{\dagger}}}.
\EN
{\myblue{This value $K_{\epsilon}$ can be obtained through a simple iterative procedure, as illustrated in
Algorithm~\ref{alg:Gtilde}.
}}

\subsection{Efficient implementation and algorithms}
\label{sec:fft_alg}
In this section, we discuss an efficient implementation of the scheme presented above using FFT.
For convenience, we define/recall sets of indices: $\mathbb{N}^{\ddagger} = \left\{-N^{\ddagger}/2+1, \ldots, N^{\ddagger}/2-1\right\}$,
$\mathbb{N}^{\dagger} = \left\{-N^{\dagger}/2, \ldots, N^{\dagger}/2\right\}$,
$\mathbb{N} = \left\{-N/2+1, \ldots, N/2-1\right\}$,
$\mathbb{J} = \left\{0, \ldots, J \right\}$, with $N^{\dagger} = 2N$ and {\myblue{$N^{\ddagger} = N + N^{\dagger} = 3N$}}.
For brevity, we adopt the notational convention: for  $n \in \mathbb{N}$ and  $l \in \mathbb{N}^{\dagger}$,
{\myblue{$g_{n - l} \equiv g_{n - l}(\Delta t, K)$}}, where $K$ is chosen by \eqref{eq:K_Oh}.
To effectively compute the discrete convolution in \eqref{eq:scheme} for a fixed $j \in \mathbb{J}$, we rewrite \eqref{eq:scheme}
in a matrix-vector product form as follows
\EQA
\label{eq:scheme_matrix}
\underbrace{\begin{bmatrix}
    v_{-N/2+1,j}^{(m-1)+} \\
    v_{-N/2+2,j}^{(m-1)+} \\
    \vdots \\
    \vdots \\
    v_{N/2-1,j}^{(m-1)+}
\end{bmatrix}}_{~~~\scalebox{1}{$v_{j}^{(m-1)+}$}}
~~~
=
~~~
\Delta s~
\underbrace{\begin{bmatrix}
    g_{N/2+1} & g_{N/2}& \dots & g_{-3N/2+1} \\
    g_{N/2+2} & g_{N/2+1} & \dots & g_{-3N/2+2} \\
    \vdots & \vdots && \vdots \\
    \vdots & \vdots && \vdots \\
    g_{3N/2-1} & g_{3N/2-2} & \dots & g_{-N/2-1}
\end{bmatrix}}_{\quad\quad\scalebox{1.1}{$\l[g_{n - l}\r]_{n \in \mathbb{N}, \, l \in \mathbb{N}^{\dagger}}$}}
\,
~~~
\underbrace{\begin{bmatrix}
    \f{1}{2} v_{-N^{\dagger}/2, j}^{m-}  \\
    v_{-N^{\dagger}/2+1, j}^{m-}  \\
    \vdots \\
    v_{N^{\dagger}/2-1, j}^{m-} \\
    \f{1}{2} v_{N^{\dagger}/2, j}^{m-}
\end{bmatrix}}_{\quad \scalebox{1}{$v_{j}^{m-}$}}.
\ENA
Here, in \eqref{eq:scheme_matrix},  the vector $v_{j}^{(m-1)+} \equiv \l[v_{n,j}^{(m-1)+}\r]_{n \in \mathbb{N}}$ is of size $(N-1)\!\times\! 1$,
the matrix $\l[g_{n - l}\r]_{n \in \mathbb{N}, \, l \in \mathbb{N}^{\dagger}}$ is
of size $(N-1)\!\times\! (2N + 1)$, and the vector $v_{j}^{m-} \equiv \l[v_{n,j}^{m-}\r]_{n \in \mathbb{N}^{\dagger}}$
is of size $(2N + 1)\!\times\! 1$.
It is important to note that $\l[g_{n - l}\r]_{n \in \mathbb{N}, \, l \in \mathbb{N}^{\dagger}}$ is a Toeplitz matrix \cite{BrycDemboJiang_2006} having constant along diagonals. To compute the matrix-vector product in \eqref{eq:scheme_matrix} efficiently using FFT, we take advantage of
a cicular convolution product  described below.
\begin{itemize}
    \item We first expand the non-square
matrix $\l[g_{n - l}\r]_{n \in \mathbb{N}, \, l \in \mathbb{N}^{\dagger}}$ (of size $(N-1)\!\times\! (N^{\dagger} + 1)$)
into a circulant matrix of size $(3N-1)\!\times\! (3N-1)$ denoted by $\tilde{g}$, and is defined as follows
\EQA
\label{eq:cir_matrix}
\tilde{g} =
\l[\begin{array}{c|c}
\tilde{g}'_{-1,0} & \tilde{g}'_{-1,1}\\
\hline
\l[g_{n - l}\r]_{n \in \mathbb{N}, \, l \in \mathbb{N}^{\dagger}} & \tilde{g}'_{0,1}\\
\hline
\tilde{g}'_{1,0} & \tilde{g}'_{1,1}
\end{array}\r]. 
\ENA
Here, $\tilde{g}'_{-1,0}$, $\tilde{g}'_{1,0}$, $\tilde{g}'_{-1,1}$, $\tilde{g}'_{0,1}$ and $\tilde{g}'_{1,1}$ are matrices
of sizes $N\! \times\! (2N+1)$, $N\! \times\! (2N+1)$,
$N\! \times\!(N-2)$, $(N-1)\! \times\! (N-2)$, and $N\!\times\! (N-2)$, respectively,
and are given below
\EQA
\tilde{g}'_{-1,0} &=&
\l[\begin{array}{cccccccc}
    g_{-N/2+1} & g_{-N/2} & \dots & g_{-3N/2+1} & g_{3N/2-1} & g_{3N/2-2} & \dots & g_{N/2} \\
    g_{-N/2+2} & g_{-N/2+1} & \dots & g_{-3N/2+2} & g_{-3N/2+1} & g_{3N/2-1} & \dots & g_{N/2+1}\\
    \vdots & \vdots & & \vdots & \vdots & \vdots & & \vdots \\
    g_{N/2} & g_{N/2-1} & \dots & g_{-N/2} & g_{-N/2-1} & g_{-N/2-2} & \dots & g_{3N/2-1}
\end{array}\r], 
\nonumber\\
\tilde{g}'_{1,0} &=&
\l[\begin{array}{cccccccc}
    g_{-3N/2+1} & g_{3N/2-1} & g_{3N/2-2} & \dots & g_{N/2+1} & g_{N/2} & \dots & g_{-N/2} \\
    g_{-3N/2+2} & g_{-3N/2+1} & g_{3N/2-1}& \dots & g_{N/2+2} & g_{N/2+1} & \dots & g_{-N/2+1}\\
    \vdots & \vdots & \vdots & & \vdots & \vdots & & \vdots \\
    g_{-N/2} & g_{-N/2-1} & g_{-N/2-2} & \dots & g_{-3N/2+1} & g_{3N/2-1}  & \dots & g_{N/2-1}
\end{array}\r], 
\nonumber\\
\tilde{g}'_{-1,1} &=&
\l[\begin{array}{cccc}
    g_{N/2-1} & g_{N/2-2} & \dots & g_{-N/2+2} \\
    g_{N/2} & g_{N/2-1} & \dots & g_{-N/2+3} \\
    \vdots & \vdots & & \vdots \\
    g_{3N/2-2} & g_{3N/2-3} & \dots & g_{N/2+1}
\end{array}\r], 
\nonumber
\\
\tilde{g}'_{0,1} &=&
{\myblue{
\l[\begin{array}{ccccc}
    g_{3N/2-1} & g_{3N/2-2}& \dots & g_{N/2+3} & g_{N/2+2}\\
    g_{-3N/2+1} & g_{3N/2-1} & \dots & g_{N/2+4} & g_{N/2+3} \\
    \vdots & \vdots &  & \vdots & \vdots  \\
	g_{-N/2-2} & g_{-N/2-3} & \dots & g_{-3N/2+2} & g_{-3N/2+1}
\end{array}\r], 
}}
\nonumber\\
\tilde{g}'_{1,1} &=&
\l[\begin{array}{cccc}
    g_{-N/2-1} & g_{-N/2-2}  & \dots & g_{-3N/2+2} \\
    g_{-N/2} & g_{-N/2-1}  & \dots & g_{-3N/2+3}\\
    \vdots & \vdots & & \vdots \\
    g_{N/2-2} & g_{N/2-1} & \dots & g_{-N/2+1}
\end{array}\r]. 
\nonumber
\ENA

\item For  fixed $j \in \mathbb{J}$, we construct $\tilde{v}_{j}^{m-}$ a vector of size $(3N-1)\!\times\!1$ by
augmenting vector $v_{j}^{m-}$, defined in \eqref{eq:scheme_matrix}, with zeros as follows
\begin{align}
\label{eq:v_vec_toeplitz}
\tilde{v}_{j}^{m-} = \l[(v_{j}^{m-})^{\top}, \, 0,\, 0,\ldots, 0\r]^{\top}
=
\l[\f{1}{2}v_{-N^{\dagger}/2, j}^{m-}, \, v_{-N^{\dagger}/2+1, j}^{m-}, \ldots, v_{N^{\dagger}/2-1, j}^{m-}, \, \f{1}{2} v_{N^{\dagger}/2, j}^{m-}, \, 0,\, 0,\ldots, 0\r]^{\top}.
\end{align}
Then, \eqref{eq:scheme_matrix} can be implemented by applying a circulant matrix-vector product
to compute an intermediate vector of discrete solutions $\tilde{v}_{j}^{(m-1)+}$ as follows
\EQA
\label{eq:scheme_cir_matrix}
\tilde{v}_{j}^{(m-1)+}  = \Delta s\, \tilde{g} \, \tilde{v}_{j}^{m-},
\qquad
j \in \mathbb{J}.
\ENA
Here, the circulant matrix $\tilde{g}$ is given by \eqref{eq:cir_matrix}, and the vector $\tilde{v}_{j}^{m-}$ is given by  \eqref{eq:v_vec_toeplitz},
and the intermediate result $\tilde{v}_{j}^{(m-1)+}$ is a vector of size $(3N-1)\!\times\! 1$,
{\myblue{
with $v_{j}^{(m-1)+}$ is the middle $2N-1$ (from the $(N+1)$-th to the $(2N-1)$-th) elements of $\tilde{v}_{j}^{(m-1)+}$.
}}

\item Observing that a circulant matrix-vector product is equal to a circular convolution product, \eqref{eq:scheme_cir_matrix} can further be written {\myblue{as a}} circular convolution product. More specifically, let $\tilde{g}_1$ be the first column of the circulant matrix $\tilde{g}$ defined in \eqref{eq:cir_matrix}, and is  given by
\EQA
\label{eq:vec_toeplitz}
\tilde{g}_1 &=& \big[g_{-N/2+1}, \, g_{-N/2+2}, \ldots, g_{3N/2-1},  \, g_{-3N/2+1}, \, g_{-3N/2+2},\ldots, g_{-N/2}\big]^{\top}.
\ENA
The circular convolution product $z = x\ast y$ is defined componentwise by
\EQA
z_{k'} = \sum_{k=-N^{\ddagger}/2+1}^{N^{\ddagger}/2-1} x_{k'-k+1} \, y_k,
\quad
k' = -N^{\ddagger}/2+1,\ldots,N^{\ddagger}/2-1,
\nonumber
\ENA
where $x$ and $y$ are two sequences with the period $(N^{\ddagger}-1)$ (i.e.\ $x_k = x_{k+(N^{\ddagger}-1)}$ and $y_k = y_{k+(N^{\ddagger}-1)}$,
{\myblue{$k' \in \mathbb{N}^{\ddagger}$)}}.
Then, \eqref{eq:scheme_cir_matrix} can be written as the following circular convolution product
\EQA
\label{eq:scheme_cir_conv}
\tilde{v}_{j}^{(m-1)+} =
\Delta s \, \tilde{g} \,
\tilde{v}_{j}^{m-}
=
\Delta s \, \tilde{g}_1 * \tilde{v}_{j}^{m-},
\qquad
j =  0, \ldots, J.
\ENA

\item The circular convolution product in \eqref{eq:scheme_cir_conv} can be computed efficiently using
FFT and iFFT as follows
\EQA
\label{eq:scheme_FFT}
\tilde{v}_{j}^{(m-1)+} = \Delta s \, \text{FFT}^{-1} \l\{ \text{FFT}(\tilde{v}_{j}^{m-}) \circ \text{FFT}(\tilde{g}_1) \r\},
\qquad
j =  0, \ldots, J.
\ENA

\item Once the vector of intermediate discrete solutions $\tilde{v}_{j}^{(m-1)+} \equiv \tilde{v}_{n,j}^{(m-1)+}$ is computed,
we then obtain the vector of discrete solutions $\l[v_{n,j}^{(m-1)+}\r]_{n \in \mathbb{N}}$ (of size $(2N+1)\!\times\! 1$) for $\Omega_{\myin}$
by discarding values  $\tilde{v}_{n,j}^{(m-1)+}$, $n \in \mathbb{N}^{\ddagger} \setminus \mathbb{N}$.
\end{itemize}

The implementation \eqref{eq:scheme_FFT} suggests that
{\myblue{we compute the weight components of $\tilde{g}_1$ only once, and}}
reuse them for the computation over all time intervals. More specifically, for a given user-tolerance $\epsilon$,
using \eqref{eq:K_Oh}, we can compute a sufficiently large the number of terms $K = K_{\epsilon}$ in the infinite series representation \eqref{eq:g_series} for these weights. Then, using Corollary~\ref{cor:twodis}, these weights for the  case $\xi$
following a normal distribution \cite{MertonJumps1976} or a double-exponential distribution \cite{KouOriginal}
can be computed only once in the Fourier space, as in \eqref{eq:scheme_FFT}, and reused for all time intervals.
The step is described in Algorithm~\ref{alg:Gtilde}.

\begin{algorithm}[htb!]
\caption{
\label{alg:Gtilde}
Computation of weight vector $\tilde{g}_1(\Delta t, K_{\epsilon})$ in the Fourier space; $\epsilon>0$ is an user-defined tolerance.
}
\begin{algorithmic}[1]
%
\STATE
set $k = K_{\epsilon} = 0$;

\STATE
compute
$\text{test} = \f{\l(\lambda\Delta t\r)^{k+1}}{(k+1)! \, \s{2 \pi \sigma^2 \Delta t}}$;
\\
compute
$g_{n - l}(\Delta t, K_{\epsilon}) = g(s_n - s_l; \Delta t, 0)$,  $n \in \mathbb{N}, \, l \in \mathbb{N}^{\dagger}$, given in Corollary~\ref{cor:twodis};

\STATE
construct
the weight vector~$\tilde{g}_1(\Delta t, K_{\epsilon})$ using $g_{n - l}(\Delta t, K_{\epsilon})$ as defined in \eqref{eq:vec_toeplitz};

\WHILE{{\red{$\text{test}  \ge \epsilon$} }}
	\label{alg:while}
	\STATE
    set $k = k+1$, and $K_{\epsilon} = k$;

    \STATE
	{\myblue{
	compute
	$\text{test} = \f{\l(\lambda\Delta t\r)^{k+1}}{(k+1)! \, \s{2 \pi \sigma^2 \Delta t}}$;
	}}
	
    \STATE
     compute
    the increments~$\Delta g_k(s_n - s_l; \Delta t)$,  $n \in \mathbb{N}, \, l \in \mathbb{N}^{\dagger}$, given in Corollary~\ref{cor:twodis};

    \STATE
     compute
	$g_{n - l}(\Delta t, K_{\epsilon}) =  g_{n - l}(\Delta t, K_{\epsilon}) + \Delta g_k(s_n - s_l; \Delta t)$,  $n \in \mathbb{N}, \, l \in \mathbb{N}^{\dagger}$;
	
	\STATE
	construct
	the weight vector~$\tilde{g}_1(\Delta t, K_{\epsilon})$ using $g_{n - l}(\Delta t, K_{\epsilon})$ as defined in \eqref{eq:vec_toeplitz};
	
\ENDWHILE

\STATE
output weight vector $\text{FFT}(\tilde{g}_1)$;
\end{algorithmic}
\end{algorithm}

%
%
%
%
%
%

Putting everything together, in Algorithm~\ref{alg:monotone}, we present a monotone integration algorithm for MV portfolio optimization.
\begin{algorithm}[htb!]
\caption{
\label{alg:monotone}
A monotone numerical integration algorithm for MV portfolio optimization when
$\xi$ follows a normal distribution \cite{MertonJumps1976} or a double-exponential distribution \cite{KouOriginal};
$\epsilon>0$ is a user-tolerance; the embedding parameter $\gamma\in \mathbb{R}$ is a fixed;
}
\begin{algorithmic}[1]

\STATE
compute weight vector $\tilde{g}_1$ using Algorithm~\ref{alg:Gtilde};
\STATE
\label{alg:initial}
initialize $v_{n, j}^{M} = {\myblue{ \l(W_{\t{liq}}(s_n, b_j) - \f{\gamma}{2} \r)^2}}$,
$n= -N^{\dagger}/2, \ldots, N^{\dagger}/2$, $j = 0, \ldots, J$;
\FOR{$m = M, \ldots, 1$}
	\STATE enforce interest rate payment \eqref{eq:recursive_interest**}
to obtain  $v_{n,j}^{m-}$, $n =  -N^{\dagger}/2, \ldots, N^{\dagger}/2$,
                $j =  0, \ldots, J$;

     \STATE
    \label{alg:step2}
    compute vectors of intermediate values $\tilde{v}_{j}^{(m-1)+}$, $j =  0, \ldots, J$ using \eqref{eq:scheme_FFT};
%
%
    \STATE
    \label{alg:step3}
    obtain vectors of discrete solutions $\l[v_{n,j}^{(m-1)+}\r]_{n \in \mathbb{N}}$, $j =  0, \ldots, J$ by discarding
    all values $\tilde{v}_{n,j}^{(m-1)+}$ (Line \eqref{alg:step2}) where $n \in \mathbb{N}^{\ddagger} \setminus \mathbb{N}$ ;
    \hfill $\Omega_{\myin}$

	
	\STATE
    compute
    \label{alg:step4}
    $v_{n,j}^{(m-1)+}$,
    $n = -N^{\dagger}/2, \ldots, -N/2$, $j =  0, \ldots, J$,
    using \eqref{eq:left_padding*};
     \hfill $\Omega_{s_{\min}}$

     \STATE
    compute
    \label{alg:step5}
	$v_{n,j}^{(m-1)+}$,
    $n = N/2, \ldots, N^{\dagger}/2$, $j =  0, \ldots, J$
	using \eqref{eq:right_padding*};
     \hfill $\Omega_{s_{\max}}$

	{\myblue{
	\STATE
        \label{alg:opti}
        solve the optimization problem \eqref{eq:Intervention_Operator_glwb}
        to obtain  $v_{n,j}^{m-1}$, $n \in \mathbb{N}^{\dagger}$, $j \in \mathbb{J}$;
        \\
        save the optimal impulse value   $c_{n,j}^{m, *}$;
	}}

	

\ENDFOR

\end{algorithmic}
\end{algorithm}

\begin{remark}[Complexity]
\label{rm:complexity}
Algorithm~\ref{alg:monotone} involves, for $m = M\ldots,1$, the key steps as follows.
\begin{itemize}
\item Compute $v_{n,j}^{(m-1)+}$, $n \in \mathbb{N}^{\ddagger}$, $j \in \mathbb{J}$ via
FFT algorithm.
The complexity of this step is
$\Ocal\l(J N^{\ddagger} \log_2 N^{\ddagger}\r) = \Ocal\l( 1/h^2 \cdot \log_2(1/h) \r)$,
where we take into account \eqref{eq:dis_parameter}.

\item We use exhaustive search through all admissible controls in $\Zcal_h$ to obtain global minimum.
Each optimization problem is solved by evaluating the objective function $\Ocal(1/h)$ times. There are $\Ocal(1/h^2)$ nodes, and $\Ocal(1)$ timesteps giving a total complexity $\Ocal(1/h^3)$. This is an order reduction compared to complexity of finite difference methods, which typically is  $\Ocal(1/h^4)$ for discrete rebalancing (see \cite{DangForsyth2014}[Section~6.1].)

\end{itemize}
\end{remark}

\subsection{Construction of efficient frontier}
We know discuss construction of efficient frontier.
To this end, we define the auxiliary function $u(s,b,t_m) = E_{\Ccal_m^*}^{x, \tn} \l[W_T\r]$, where
$\Ccal_m^*$, as defined in \eqref{eq:optimal_ctrl}, is the optimal control
for the problem $PCMV_{\Delta t} (\tn;\gamma)$ obtained by solving the localized problem
in Definition~\ref{def:glwb}. Similar to \cite{DangForsyth2014, PvSDangForsyth2018_MQV, PvSDangForsyth2018_TCMV},
we now present a localized problem for
{\myblue{$u(x^m) = u(s,b,t_m)$, with $x^m = (s,b,t_m)$ and $t_m \in \Tcal_M \cup \l\{T\r\}$,}}
in the sub-domains \eqref{eq:sub_domain_whole} as below
{\myblue{
\begin{linenomath}
\begin{subequations}
\label{eq:linear}
\begin{empheq}[left={\empheqlbrace}]{alignat=6}
&u \l(x^M\r) ~&=&~ W_{\t{liq}}(s, b)-\varepsilon,
 \qquad \qquad \qquad \qquad \qquad~ x^M \in \Omega \times \l\{T\r\},
\label{eq:linear_T}
\\
&u(x^m) ~&=&~  \mathcal{M}(c_m^*) \, u\l(x^{m+}\r),  \qquad \qquad \qquad \qquad \quad x^m \in \Omega \times \Tcal_M ,
\label{eq:linear_c}
\\
&u(x^{m-}) ~&=&~ u \l(s,be^{R(b) \Delta t},\tn\r), \qquad \qquad \qquad\qquad x^m \in \Omega \times \l\{t_m\r\}, \quad m = M, \ldots, 1,
\label{eq:linear_i}
\\
&u(x^m) ~&=&~
{\red{
\f{\l|b\r|}{b_{\max}}u \l(s, \sgn(b) b_{\max},t_m\r),
\qquad \qquad\quad x^m \in \Omega_{b_{\max}} \times \l\{t_m\r\}, \, m = M,\ldots,1,
}}
\label{eq:linear_bmax}
\\
&u(x^{(m-1)+}) ~&=&~  \l\{
\begin{small}
\begin{aligned}
&\int_{s_{\min}^{\dagger}}^{s_{\max}^{\dagger}}
u \l(s',b,\tn^-\r) g(s- s'; \Delta t)~ \md s',
&x^{(m-1)+} \in \Omega_{\myin} \times \l\{t_{m-1}^+\r\},
\, t_{m-1} \in \Tcal_M,
\\
&u(x^{m-}) \, e^{\mu \Delta t},
&x^{(m-1)+} \in \Omega_{s_{\max}} \times \l\{t_{m-1}^+\r\},
\, t_{m-1} \in \Tcal_M,
\\
&u(x^{m-}),
&x^{(m-1)+} \in \Omega_{s_{\min}} \times \l\{t_{m-1}^+\r\},
\, t_{m-1} \in \Tcal_M.
\end{aligned}
\end{small}
\r\}
\label{eq:linear_all}
\end{empheq}
\end{subequations}
\end{linenomath}
}}
Here, in \eqref{eq:linear_c}, $c_m^*$ is the optimal impulse value obtained from solving the value function problem
\eqref{eq:recursive_b_comp_dom}; \eqref{eq:linear_i} is due to the settlement (payment or receipt)
of interest due for the time interval $[t_{m-1}, t_{m}]$, $m = M, \ldots, 1$;
{\myblue{\eqref{eq:linear_bmax}-\eqref{eq:linear_all} are equations for spatial sub-domains $\Omega_{b_{\max}}$, $\Omega_{\myin}$, $\Omega_{s_{\max}}$ and $\Omega_{s_{\min}}$.}}
The localized problem \eqref{eq:linear} can be solved numerically in a straightforward manner.
In  particular, at a reference gridpoint $(s_n, b_j)$,  the optimal impulse value $c_m^*$ in \eqref{eq:linear_c}
becomes $c_{n, j}^{m,*}$ which is the optimal impulse value obtained from Line~\eqref{alg:opti} of Algorithm~\ref{alg:monotone}.
We emphasize the convention that it may be non-optimal to rebalance, in which case,
the convention is $c_{n, j}^{m,*} = b_j$.
Furthermore, the convolution integral in \eqref{eq:linear_all} can be approximated
using a scheme similar to \eqref{eq:scheme}.  For brevity,
{\myblue{
we only provide the proof of numerical scheme for $\Omega_{s_{\max}}$ in Appendix~\ref{sec:v_u_s_max}, and omit details of the other schemes for \eqref{eq:linear}.
}}

We assume that
{\myblue{given the initial state $x = (s,b)$ at time $t_0$ and}}
the positive discretization parameter $h$, the efficient frontier (EF), denote by $\mathcal{Y}_h$,
can be traced out using the embedding parameter $\gamma \in \mathbb{R}$ as below
\EQ
\label{eq:y}
\mathcal{Y}_h = \bigcup\limits_{\gamma \in \mathbb{R}} \l(\s{\l(Var_{\Ccal_0^*}^{x,t_0} \l[W_T\r]\r)_h} , \, \l(E_{\Ccal_0^*}^{x, t_0} \l[W_T\r]\r)_h\r)_{\gamma}.
\EN
Here, $(\cdot)_h$ refers to a discretization approximation to the expression in the brackets. Specifically, for fixed $\gamma$, we let
\vspace{-2ex}
\EQA
V_0 \equiv  {\myblue{v(s,b,t_0)}} = E_{\Ccal_0^*}^{x, t_0} \l[\l(W_T-\f{\gamma}{2}\r)^2\r]
\quad {\text{and}}
\quad
U_0 \equiv  {\myblue{u(s,b,t_0)}} = E_{\Ccal_0^*}^{x, t_0} \l[W_T\r].
\ENA
Then $\l(Var_{\Ccal_0^*}^{x,t_0} \l[W_T\r]\r)_h$ and $\l(E_{\Ccal_0^*}^{x, t_0} \l[W_T\r]\r)_h$ corresponding to $\gamma$ in \eqref{eq:y} are computed as follows
\EQA
\l(Var_{\Ccal_0^*}^{x,t_0} \l[W_T\r]\r)_h = V_0 + \gamma U_0 - \f{\gamma^2}{4} - \l(U_0\r)^2
\quad {\text{and}}
\quad
\l(E_{\Ccal_0^*}^{x, t_0} \l[W_T\r]\r)_h = U_0.
\ENA

\section{Pointwise convergence}
\label{sc:conv}
In this section, we establish pointwise convergence of the proposed numerical integration method.
We start by verifying three properties: $\ell_\infty$-stability, monotonicity,
and consistency (with respect to the integral formulation \eqref{eq:omega_in}).
We recall that the infinite series $g_{n-l}(\Delta t, \infty)$ is approximated by
$g_{n-l}(\Delta t, K_{\epsilon})$, where $\epsilon>0$ is an user-defined tolerance,
and we have the error bound $g_{n-l}(\Delta t, \infty) - g_{n-l}(\Delta t, K_{\epsilon}) < \epsilon$,
as noted in \eqref{eq:sum_g}.

It is straightforward to see that the {\myblue{proposed}} scheme is monotone since all the weights
$g_{n-l}$ are positive. Therefore, we will primarily focus on $\ell_\infty$-stability and consistency of the scheme.
We will then show that convergence of our scheme is ensured if $K_{\epsilon} \to \infty$ as $h \to 0$,
or equivalently, $\epsilon \to 0$ as $h \to 0$.

For subsequent use,  we present a remark about {\myblue{$g_{n-l}(\Delta t; K_{\epsilon})$, $n \in \mathbb{N}$, $l \in \mathbb{N}^{\dagger}$}}.
\begin{remark}
\label{rm:sum_p}
{\myblue{Recalling that}} $g(s, s'; \Delta t) \equiv g(s, s'; \Delta t, \infty)$ is a (conditional) probability density function,
for a fixed $s_n \in [s_{\min}, s_{\max}]$, we have
$\ds
\int_{\mathbb{R}} g(s_n, s; \Delta  t, \infty )~ds = 1$,
hence
$\int_{s_{\min}^{\dagger}}^{s_{\max}^{\dagger}} g(s_n, s; \Delta  t, \infty)~ds \le 1$.
{\myblue{Furthermore}}, applying quadrature rule to approximate
$\ds {\myblue{ \int_{s_{\min}^{\dagger}}^{s_{\max}^{\dagger}} }} g(s_n, s; \Delta  t , \infty)~ds$
gives rise to an approximation error, denoted by $\epsilon_g$, defined as follows
\[
\epsilon_g := \bigg| {\red{\Delta s \sum_{l = -N^{\dagger}/2}^{N^{\dagger}/2} \omega_l }}~g_{n-l}(\Delta  t , \infty ) - \int
_{s_{\min}^{\dagger}}^{s_{\max}^{\dagger}} g(s_n, s; \Delta  t, \infty )~ds \bigg|.
\]
It is straightforward to see that $ \epsilon_g \to 0$ as $N^{\dagger} \to \infty$, i.e.\ as $h \to 0$.
Using the above results, recalling the weights $\omega_l$, {\myblue{$l \in \mathbb{N}^{\dagger}$}}, are positive,
and the error bound \eqref{eq:sum_g},
{\myblue{we have}}
\EQA
\label{eq:qua_error}
0 ~\leq~ \Delta s {\myblue{\sum_{l = -N^{\dagger}/2}^{N^{\dagger}/2}}} \omega_l~g_{n-l}(\Delta  t, K_{\epsilon})
~\leq~ \Delta s {\myblue{\sum_{l = -N^{\dagger}/2}^{N^{\dagger}/2}}} \omega_l~g_{n-l}(\Delta  t,\infty)
~\leq~ 1 + \epsilon_g
< e^{\epsilon_g}.
\ENA
\end{remark}
%

\subsection{Stability}
Our scheme consists of the following equations: \eqref{eq:terminal} for $\Omega\times \{T\}$, \eqref{eq:left_padding*} for $\Omega_{s_{\min}}$,
\eqref{eq:right_padding*} for $\Omega_{s_{\max}}$,
and finally \eqref{eq:scheme} for $\Omega_{\myin}$.
We start by verifying $\ell_\infty$-stability of our scheme.
\begin{lemma}[$\ell_\infty$-stability]
\label{lemma:stability}
Suppose the discretization parameter $h$ satisfies \eqref{eq:dis_parameter}.
If linear interpolation is used for the intervention action \eqref{eq:Intervention_Operator_glwb},
then the scheme \eqref{eq:terminal}, \eqref{eq:left_padding*}, \eqref{eq:right_padding*}, and \eqref{eq:scheme} satisfies
the bound
$\ds \sup_{h > 0} \left\| v^{m} \right\|_{\infty} < \infty$
for all $m = M, \ldots, 0$, as the discretization parameter $h \to 0$.
Here, we have $\left\| v^{m} \right\|_{\infty} = \max_{n, j} |v_{n, j}^{m}|$,
{\myblue{$n \in \mathbb{N}^{\dagger}$ and
$j \in \mathbb{J}$}}.
\end{lemma}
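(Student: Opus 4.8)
The plan is to argue by backward induction on $m$, from the maturity $m=M$ down to $m=0$, tracking how each of the three operations performed over a rebalancing interval $[t_{m-1},t_m]$ — interest settlement, time advancement, and the intervention/minimization — acts on $\|\cdot\|_{\infty}$. The structural fact that makes this work is that $\Delta t = T/M$ is held fixed, so $M$ is a fixed integer independent of $h$ (the relations \eqref{eq:dis_parameter} hold by hypothesis but are not otherwise essential here); consequently, if every sub-step amplifies the nodal $\ell_\infty$-norm by at most a multiplicative constant that does not depend on $h$, then iterating a fixed finite number of such bounded amplifications yields a bound uniform in $h$. The base case is immediate: on $\Omega\times\{T\}$ the scheme is the terminal assignment \eqref{eq:terminal}, and since $\Omega$ is bounded, $W_{\t{liq}}(s,b)=e^{s}+b-\delta$ is bounded on $\Omega$ by some $W^{\ast}$ independent of $h$, and $\gamma$ is fixed, so $\|v^{M}\|_{\infty}\le (W^{\ast}+|\gamma|/2)^{2}=:C_M<\infty$.

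For the inductive step I would assume $\|v^{m}\|_{\infty}$ is bounded by a constant independent of $h$ and propagate this through the three sub-steps. First, the interest-settlement step \eqref{eq:recursive_interest**} applies the operator $\mathcal{I}\{v_n^{m}\}(\cdot)$ at the point $b_j e^{R(b_j)\Delta t}$: when that point lies in $[-b_{\max},b_{\max}]$, linear interpolation is a convex combination of nodal values and hence $\ell_\infty$-nonexpansive; when it falls into $\Omega_{b_{\max}}$ (possible only for $b_j$ near $\pm b_{\max}$), the value is fixed by the quadratic prescription \eqref{eq:v_h_bmax}, contributing a factor at most $\bigl(b_{\max}e^{r_b\Delta t}/b_{\max}\bigr)^{2}\le e^{2r_bT}$. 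Hence $\|v^{m-}\|_{\infty}\le e^{2r_bT}\|v^{m}\|_{\infty}$. Second, for the time advancement: in $\Omega_{s_{\min}}$ the scheme \eqref{eq:left_padding*} is the identity; in $\Omega_{s_{\max}}$ the scheme \eqref{eq:right_padding*} multiplies by the fixed constant $e^{(\sigma^2+2\mu+\lambda\kappa_2)\Delta t}$; and in $\Omega_{\myin}$ the discrete convolution \eqref{eq:scheme} gives $|v_{n,j}^{(m-1)+}|\le\bigl(\Delta s\sum_{l}\omega_l\,g_{n-l}(\Delta t,K_{\epsilon})\bigr)\|v^{m-}\|_{\infty}$. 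Here I invoke Remark~\ref{rm:sum_p}: since $g(\cdot)$ is a probability density, the weights $\omega_l$ are nonnegative, and truncating the termwise-nonnegative series \eqref{eq:g_series} to $K_{\epsilon}$ terms only decreases $g_{n-l}$, the bound \eqref{eq:qua_error} gives $0\le\Delta s\sum_{l}\omega_l\,g_{n-l}(\Delta t,K_{\epsilon})\le e^{\epsilon_g}$ with $\epsilon_g\to0$ as $h\to0$ (and in any case this row sum is bounded uniformly in $h$ by $(s_{\max}^{\dagger}-s_{\min}^{\dagger})/\sqrt{2\pi\sigma^2\Delta t}$ via the pointwise bound \eqref{eq:Kbound} on $g$). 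Collecting the three regions, $\|v^{(m-1)+}\|_{\infty}\le A_2\|v^{m-}\|_{\infty}$ with $A_2$ independent of $h$.

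The third sub-step is the intervention/minimization \eqref{eq:Intervention_Operator_glwb}: since $\mathcal{Z}_h$ consists of $b$-grid points there is no interpolation in $b$; the $s$-interpolation (with the zero-padding convention for the sub-domains beyond $[s_{\min}^{\dagger},s_{\max}^{\dagger}]$) is again a convex combination, hence nonexpansive; and $|\min\{a,b\}|\le\max\{|a|,|b|\}$; therefore $\|v^{m-1}\|_{\infty}\le\|v^{(m-1)+}\|_{\infty}$. Chaining the three estimates gives $\|v^{m-1}\|_{\infty}\le e^{2r_bT}A_2\,\|v^{m}\|_{\infty}$, and iterating down from $m=M$ yields $\|v^{m}\|_{\infty}\le (e^{2r_bT}A_2)^{M}C_M$ for every $m$ and every $h$, which is the claimed uniform bound.

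I expect the main obstacle to be the careful handling of the two operations that are not plainly nonexpansive: the quadratic $b_{\max}$-extrapolation embedded in the interest-settlement step, and the control of the quadrature-weight row sums in $\Omega_{\myin}$ — this latter point is exactly where Remark~\ref{rm:sum_p} and the nonnegativity of every term of the series \eqref{eq:g_series} are indispensable (and where it matters that truncation to $K_{\epsilon}$ only shrinks the weights, so stability holds for any $K$). A secondary technical point I would need to check is that the points $s_n^{+}$ produced by $\mathcal{M}(c)$ in \eqref{eq:Intervention_Operator_glwb} stay inside the padded $s$-grid, so that the interpolation there is genuinely a convex combination; this holds because $e^{s}+b-\delta-c$ is bounded above on $\Omega$ for the fixed, sufficiently large domain bounds, so $s_n^{+}=\ln(\max(e^{s}+b-\delta-c,\einf))$ remains in $[\winf,s_{\max}^{\ddagger}]$.
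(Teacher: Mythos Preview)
Your proposal is correct and follows essentially the same approach as the paper's proof: both argue by backward induction on $m$, decompose the step $t_m\to t_{m-1}$ into interest settlement, time advancement (split across $\Omega_{s_{\min}}$, $\Omega_{\myin}$, $\Omega_{s_{\max}}$), and intervention, and bound each sub-step's amplification by an $h$-independent constant using Remark~\ref{rm:sum_p} for the row sums in $\Omega_{\myin}$ and the nonexpansiveness of linear interpolation for the intervention. The paper tracks a slightly sharper per-step constant $e^{\epsilon_g+(2r_{\max}+\sigma^2+2\mu+\lambda\kappa_2)\Delta t}$, whereas you bound more coarsely (e.g.\ $e^{2r_b\Delta t}\le e^{2r_bT}$), but since $M$ is fixed this makes no difference to the conclusion.
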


\begin{proof}[Proof of Lemma~\ref{lemma:stability}]
First, we note that, for any fixed $h >0$, as given by \eqref{eq:terminal},
and for a finite $\gamma$, we have $\left\| v^{M} \right\|_{\infty} < \infty$, since $\Omega$ is a bounded domain.
Therefore, we have $\sup_{h > 0} \left\| v^{M} \right\|_{\infty} < \infty$.
Motivated by this observation, to demonstrate $\ell_\infty$-stability of our scheme, we will show that, for a fixed $h > 0$, at any $(s_n, b_j,
t_m)$, $m = M, \ldots, 0$, we have
\EQ
\label{eq:key}
|v_{n, j}^{m}| < e^{(M-m)\l(\epsilon_g + \l( {\red{2r_{\max}}} + \sigma^2 + 2\mu + \lambda \kappa_2\r)\Delta t\r)}\left\| v^{M} \right\|_{\infty}{\myblue{,}}
\EN
where (i) $\epsilon_g$ is the error of the quadrature rule discussed in Remark~\ref{rm:sum_p},
(ii) $r_{\max} = \max\l\{r_b, r_{\iota}\r\}$, and
(iii) $\kappa_2 = \Ebb \l[\l(e^{\xi}-1\r)^2\r]$. In \eqref{eq:key}, the term $e^{(M-m) {\red{2r_{\max}}}\Delta t}$
is a result of the evaluation of $v_{n,j}^{m-}$ using
\eqref{eq:v_h_bmax} for nodes near $\pm b_{\max}$.
For the rest of the proof, we will show the key inequality \eqref{eq:key}
when $h > 0$ is fixed. The proof follows from a straightforward maximum analysis,
since $\Omega$ is a bounded domain. For brevity, we outline only key steps
of an induction proof below.

We use induction $m$, $m = M-1, \ldots, 0$, to show the bound \eqref{eq:key}
for $\Omega_{s_{\min}}\cup \Omega_{\myin}\cup \Omega_{s_{\max}}$.
For the base case, $m = M-1$ and thus \eqref{eq:key} becomes
\begin{align}
\label{eq:base_case}
|v_{n, j}^{M-1}| < e^{\epsilon_g  + ({\red{2r_{\max}}} + \sigma^2 + 2\mu + \lambda \kappa_2)\Delta t}\left\| v^{M} \right\|_{\infty},
\quad
{\myblue{ n \in \mathbb{N}^{\dagger} \text{ and }
j \in \mathbb{J} }}
.
\end{align}
For the settlement of interest rate for all  $\Omega_{s_{\min}}\cup \Omega_{\myin}\cup \Omega_{s_{\max}}$,  as reflected by \eqref{eq:recursive_interest**}, we have $|v_{n, j}^{M-}|<  e^{ {\red{2r_{\max}}} \Delta t}|v_{n, j}^{M}|$,
{\myblue{$n \in \mathbb{N}^{\dagger}$ and $j \in \mathbb{J}$}}.
Since $|v_{n, j}^{M}| \le \left\| v^{M} \right\|_{\infty}$,
it follows that
\EQ
\label{eq:itner}
|v_{n, j}^{M-}|<  e^{{\red{2r_{\max}}} \Delta t}\left\| v^{M}\right\|_{\infty}.
\EN
We now turn to $\ell_\infty$-stability of \eqref{eq:right_padding*} (for $\Omega_{s_{\max}}$).
From \eqref{eq:right_padding*}, we note that for
{\myblue{$n \in \big\{N/2, \ldots, N^{\dagger}/2\big\}$}} and {\myblue{$j \in \mathbb{J}$}},
\EQ
\label{eq:right_padding**}
|v_{n, j}^{(M-1)+}| = e^{\Delta t(\sigma^2 + 2\mu + \lambda \kappa_2)} |v_{n, j}^{M-}|
\overset{\eqref{eq:itner}}{\le} e^{\Delta t( {\red{2r_{\max}}} + \sigma^2 + 2\mu + \lambda \kappa_2)} \left\| v^{M}\right\|_{\infty}
\le e^{\epsilon_g  + \Delta t( {\red{2r_{\max}}} + \sigma^2 + 2\mu + \lambda \kappa_2)} \left\| v^{M}\right\|_{\infty}
\EN
noting $e^{\epsilon_g} \ge 1$.
Using \eqref{eq:itner}, it is trivial that \eqref{eq:left_padding*} (for $\Omega_{s_{\min}}$) satisfies
\EQ
\label{eq:left_padding**}
|v_{n, j}^{(M-1)+}| \le e^{\epsilon_g + \Delta t({\red{2r_{\max}}} + \sigma^2 + 2\mu + \lambda \kappa_2)} \left\| v^{M}\right\|_{\infty},
n \in  \big\{-N^{\dagger}/2, \ldots, -N/2\big\}, ~ {\myblue{j \in \mathbb{J}}}.
\EN
Now, we focus on the timestepping scheme \eqref{eq:scheme} (for $\Omega_{\myin}$).
For {\myblue{$n \in \mathbb{N}$ and $j \in \mathbb{J}$}},
we have
\begin{linenomath}
\begin{align}
|v_{n,j}^{(M-1)+}| \le \Delta s
{\myblue{\sum_{l=-N^{\dagger}/2}^{N^{\dagger}/2}}}
\omega_{l}\, g_{n - l}(\Delta t, K_{\epsilon}) \,
|v_{l, j}^{M-}|
&\overset{\eqref{eq:itner}}{\le}
e^{{\red{2r_{\max}}} \Delta t }\left\| v^{M}\right\|_{\infty}\bigg(
\Delta s
{\myblue{\sum_{l=-N^{\dagger}/2}^{N^{\dagger}/2}}}
\omega_{l}\, g_{n - l}(\Delta t, K_{\epsilon})\bigg) \,
\\
&\overset{\text{(i)}}{\le}
e^{\epsilon_g + \l({\red{2r_{\max}}}  +  \sigma^2 + 2\mu + \lambda \kappa_2\r)\Delta t }\left\| v^{M}\right\|_{\infty}.
\label{eq:vl}
\end{align}
\end{linenomath}
Here, (i) is due to \eqref{eq:qua_error}
and \eqref{eq:right_padding**} and \eqref{eq:left_padding**}.

Finally, given  \eqref{eq:vl}, we bound the intervention result $|v_{n, j}^{(M-1)}|$,
{\myblue{$n \in \mathbb{N}^{\dagger}$  and  $j \in \mathbb{J}$}},
obtained from  \eqref{eq:Intervention_Operator_glwb}.
Since linear interpolation is used, the weights for interpolation are non-negative.
In addition, due to \eqref{eq:right_padding**}, \eqref{eq:left_padding**}, and \eqref{eq:vl},
the numerical solutions at nodes used for interpolation, namely $|v_{l, j}^{(M-1)+}|$,
{\myblue{$l \in  \mathbb{N}^{\dagger}$}}, are also bounded by
\[
|v_{l, j}^{(M-1)+}| \le e^{\epsilon_g + \l({\red{2r_{\max}}}  +  \sigma^2 + 2\mu + \lambda \kappa_2\r)\Delta t }\left\| v^{M}\right\|_{\infty}.
\]
Therefore, by monotonicity of linear interpolation, which is preserved by the $\sup(\cdot)$ operator in \eqref{eq:Intervention_Operator_glwb},
$|v_{n, j}^{(M-1)}|$, {\myblue{$n \in \mathbb{N}^{\dagger}$ and $j \in \mathbb{J}$}},
satisfy \eqref{eq:base_case}. We have proved the base case \eqref{eq:base_case}.
Similar arguments can be used to show the induction step. This concludes the proof.
\end{proof}

\subsection{Consistency}
In this subsection, we mathematically demonstrate the pointwise consistency  of
the proposed scheme with respect to the MV optimization in Definition~\ref{def:glwb}.
Since it is straightforward that \eqref{eq:terminal} is consistent with the terminal condition \eqref{eq:recursive_b_comp_initial*}
($\Omega\times \{T\}$), we primarily focus  on the consistency of the scheme on $\Omega\times \{t_{m-1}\}$, $m = M, \ldots, 1$.

We start by introducing notational convention. We use $\x = (s, b) \in \Omega$ and
$\x^m \equiv (s, b, t_m) \in \Omega \times \{t_m\}$, $m = M, \ldots, 0$.  In addition, for brevity,
we use $v^m(\x)$ instead of $v(s, b, t_m)$, $m = M, \ldots, 0$.
We now write the MV portfolio optimization in Definition~\ref{def:glwb}
and the proposed scheme in forms amendable for analysis.
Recalling $s^+(s, b, c)$ defined in \eqref{eq:S+_B+},
over each time interval $[t_{m-1}, t_m]$, where $m = M, \ldots, 1$, we write
the MV portfolio optimization in Definition~\ref{def:glwb} via an operator $\mathcal{D}(\cdot)$ as follows
\begin{align}
\label{eq:con_mean}
%
 v^{m-1}(s, b)  = \mathcal{D}\big({\bf{x}}^{m-1}, v^{m}\big) &\coloneqq
  {\myblue{ \min\left\{v(s, b, t_{m-1}^+), \,
 \inf_{c \in \mathcal{Z}} \mathcal{M}(c)~v(s, b ,t_{m-1}^+)\r\}}}
 \nonumber
 \\
& =
 {\myblue{ \min\left\{v(s, b, t_{m-1}^+), \,
 \inf_{c \in \mathcal{Z}} v(s^+(s, b, c), c, t_{m-1}^+)\right\}}}.
\end{align}
Here,  $v(s^+(s, b, c), c, t_{m-1}^+)$ is given by
\begin{linenomath}
\begin{subequations}\label{eq:scheme_v_all*}
\begin{empheq}[left={

\ENA
where $p(y)$ is the probability density of $\xi$, and $Y_k = \sum_{\ell=1}^{k} y_{\ell}$, with $Y_0 = 0$, and
for $k = 0$, $\l(\Gamma(\eta)\r)^0 = 1$. Then, we have
{\myblue{
\EQA
g(s; \Delta t) &=&
\label{eq:g_proof_int}
\f{1}{2\pi}
\sum_{k=0}^{\i} \f{\l(\lambda\Delta t\r)^k}{k!}
\int_{-\infty}^{\infty}
e^{- \alpha \eta^2 +\l(\beta +s\r) \l(i \eta \r) + \theta} \,
\l(\Gamma\l( \eta \r) \r)^k ~ \md \eta,
\\
\label{eq:g_proof_sum}
&{\buildrel (\text{i}) \over =}& 
\f{1}{\s{4 \pi \alpha}}
\sum_{k=0}^{\infty} \f{\l(\lambda \Delta t\r)^{k}}{k!}
\int_{-\infty}^\infty \ldots \int_{-\infty}^\infty
            	\exp\l(\theta  -
	\f{\l(\beta + s +  Y_k\r)^2}
	{4 \alpha}\r)
    \l(\prod_{\ell = 1}^{k} p(y_\ell)\r) \md y_1 \ldots \md y_k,
\ENA
where the first term of the series corresponds to $k = 0$ and is equal to $\f{1}{\s{4\pi \alpha}}\exp\l(\theta - \f{\l(\beta + s \r)^2}{4 \alpha}\r)$.
Here, in (i), we use the Fubini's theorem and the well known result $\int_{-\i}^{\i}e^{-a\phi^2 - b\phi}\md\phi = \s{\f{\pi}{a}} e^{b^2/4a}$.
}}

\section{$\boldsymbol{\Omega_{s_{\max}}}$: boundary expressions}
\label{sec:v_u_s_max}
Recalling the sub-domain definitions in \eqref{eq:sub_domain_whole}, we observe that $\Omega_{s_{\max}}$ is the boundary where $s \rightarrow \i$.
For fixed $b$, noting the terminal condition \eqref{eq:recursive_b_comp_initial*}, we assume that $v(s \rightarrow \i, b, t) \approx A_1(t) e^{2s}$ for some unknown function $A_1(t)$. 
Using the infinite series representation of $g(s-s'; \Delta t)$ given Lemma~\ref{lemma:1} (proof in {\myblue{ Appendix~\ref{sec:app_g_lemma}) }}
and the integral~\eqref{eq:recursive_c_comp_int},
we have $v(s, b, t_{m-1}^+) =  \int_{-\infty}^{\infty} A_1(t_{m}^-) e^{2s'} g(s-s'; \Delta t)~ \md s' ~ \ldots$
\EQA
\label{eq:v_zmax_untegral}
\ldots &~=~&  \int_{-\infty}^{\infty}
\f{A_1(t_{m}^-)}{\s{4\pi\alpha}} \sum_{k=1}^{\i} \f{\l(\lambda\Delta t\r)^k}{k!}
\int_{-\infty}^\infty \ldots \int_{-\infty}^\infty
            	\exp\l(\theta - \f{\l(\beta + s - s' + Y_k\r)^2}{4\alpha} + 2s'\r)
    \prod_{\ell = 1}^{k} p(y_{\ell})
\md y_1 \ldots \md y_k~\md s'
\nonumber\\
&& \qquad \qquad \qquad \qquad \qquad \qquad
+ \int_{-\infty}^{\infty}
\f{A_1(t_{m}^-)}{\s{4\pi\alpha}}
            	\exp\l(\theta - \f{\l(\beta + s - s'\r)^2}{4\alpha} + 2s'\r) \md s',
\nonumber\\
&~=~&
A_1(t_{m}^-) \exp\l(\theta + 4\alpha + 2\l(\beta + s\r)\r)
\sum_{k=1}^{\i} \f{\l(\lambda\Delta t\r)^k}{k!}
\int_{-\infty}^\infty \ldots \int_{-\infty}^\infty
\exp\l(2Y_k\r)
\prod_{\ell = 1}^{k} p(y_{\ell})
\md y_1 \ldots \md y_k
\nonumber\\
&& \qquad \qquad \qquad \qquad \qquad \qquad
+ A_1(t_{m}^-) \exp\l(\theta + 4\alpha + 2\l(\beta + s\r)\r),
\nonumber\\
&~=~&  A_1(t_{m}^-) \exp\l\{\l(2\mu + \sigma^2 -2\lambda \kappa -\lambda\r)\Delta t + 2s\r\} \,
\sum_{k=0}^{\i} \f{\l(\lambda\Delta t\r)^k}{k!}
\l(\int_{-\infty}^\infty e^{2y}p(y) \md y \r)^k,
\nonumber\\
&~=~&  A_1(t_{m}^-) e^{2s} \,
\exp\l\{\l(2\mu + \sigma^2 -2\lambda \kappa -\lambda\r)\Delta t\r\} \,
\exp\l\{\lambda\l(\kappa_2 + 2\kappa + 1\r)\Delta t\r\},
\nonumber\\
&~=~&  v(s, b,t_m^-) \,
e^{\l(\sigma^2 + 2\mu + \lambda \kappa_2\r)\Delta t},
\nonumber
\ENA
where we use
$\alpha = \f{\sigma^2}{2} \, \Delta t$, $\beta = \l(\mu - \lambda \kappa - \f{\sigma^2}{2}\r) \, \Delta t$ and $\theta = -\lambda \Delta t$.

Similarly, for each fixed $B(t)=b$, we assume the {\myblue{auxiliary linear value function}} $u(s \rightarrow \i, b, t) \approx A_2(t) e^s$, for some unknown function $A_2(t)$.
we have ${\myblue{u(s, b, t_{m-1}^+)}} =  \int_{-\infty}^{\infty} A_2(t_{m}^-) e^{s'} g(s-s'; \Delta t)~ \md s' ~ \ldots$
\EQA
\label{eq:u_zmax_untegral}
\ldots &~=~&  \int_{-\infty}^{\infty}
\f{A_2(t_{m}^-)}{\s{4\pi\alpha}} \sum_{k=1}^{\i} \f{\l(\lambda\Delta t\r)^k}{k!}
\int_{-\infty}^\infty \ldots \int_{-\infty}^\infty
\exp\l(\theta - \f{\l(\beta + s - s' + Y_k\r)^2}{4\alpha} + s'\r)
\prod_{\ell = 1}^{k} p(y_{\ell})
\md y_1 \ldots \md y_k~ \md s'
\nonumber\\
&& \qquad \qquad \qquad \qquad \qquad \qquad
+ \int_{-\infty}^{\infty}
\f{A_2(t_{m}^-)}{\s{4\pi\alpha}}
            	\exp\l(\theta - \f{\l(\beta + s - s'\r)^2}{4\alpha} + s'\r) \md s',
\nonumber\\
&~=~&
A_2(t_{m}^-) \exp\l(\theta + \alpha + \beta + s\r)
\sum_{k=1}^{\i} \f{\l(\lambda\Delta t\r)^k}{k!}
\int_{-\infty}^\infty \ldots \int_{-\infty}^\infty
\exp\l(Y_k\r)
\prod_{\ell = 1}^{k} p(y_{\ell})
\md y_1 \ldots \md y_k
\nonumber\\
&& \qquad \qquad \qquad \qquad \qquad \qquad
+ A_2(t_{m}^-) \exp\l(\theta + \alpha + \beta + s\r),
\nonumber\\
&~=~&  A_2(t_{m}^-) \exp\l\{\l(\mu -\lambda \kappa -\lambda\r)\Delta t + s\r\} \,
\sum_{k=0}^{\i} \f{\l(\lambda\Delta t\r)^k}{k!}
\l(\int_{-\infty}^\infty e^{y}p(y) \md y \r)^k,
\nonumber\\
&~=~&  A_2(t_{m}^-) e^{s} \,
\exp\l\{\l(\mu -\lambda \kappa -\lambda\r)\Delta t\r\} \,
\exp\l\{\lambda \kappa \Delta t + \lambda\Delta t\r\},
\nonumber\\
&~=~&  {\myblue{u(s, b,t_m^-)}} \, e^{\mu \, \Delta t}.
\nonumber
\ENA

{\myblue{
\section{Proof of $\boldsymbol{g\l(s; \Delta t \r)}$ for $\boldsymbol{\xi\sim \text{Asym-Double-Exponential}(q_1,\eta_1,\eta_2)}$}
}}
\label{sec:app_g}
In this case, according to Lemma~\ref{lemma:1}, we have $g(s; \Delta t) = \ldots$
\begin{align}
\ldots
&=
\f{\exp\l(\theta - \f{\l(\beta + s \r)^2}{4 \alpha}\r)}{\s{4\pi \alpha}} +
\f{e^{\theta}}{\s{4 \pi \alpha}}
\sum_{k=1}^{\infty} \f{\l(\lambda \Delta t\r)^{k}}{k!}
\underbrace{\int_{-\infty}^\infty \ldots \int_{-\infty}^\infty
            	\exp\l( -
	\f{\l(\beta + s +  Y_k\r)^2}{4 \alpha}\r)
    \prod_{\ell = 1}^{k} p(y_\ell) 
    \md y_1 \ldots \md y_k}_{E_k}
\nonumber\\
&=
\f{\exp\l(\theta - \f{\l(\beta + s \r)^2}{4 \alpha}\r)}{\s{4\pi \alpha}} +
\f{e^{\theta}}{\s{4 \pi \alpha}}
\sum_{k=1}^{\infty} \f{\l(\lambda \Delta t\r)^{k}}{k!}~E_k.
\label{eq:g_Ek}
\end{align}
Here, the term $E_k$ in \eqref{eq:g_Ek} is clearly non-negative and can be computed as
\EQA
\label{eq:E_k_def}
\displaystyle
E_k =
\int_{-\infty}^{\infty}
\exp\l(- \f{\l(\beta + s +  y\r)^2}{4 \alpha}\r) \,
p_{\hat{\xi}_k}(y) \md y,
\ENA
where $p_{\hat{\xi}_k}(y)$ is the PDF of the random variable $\hat{\xi}_k = \sum_{\ell=1}^{k} \xi_{\ell}$, for fixed $k$.
To find $p_{\hat{\xi}_k}(y)$, the key step is the decomposition of $\hat{\xi}_k = \sum_{\ell=1}^{k} \xi_{\ell}$ into sums of i.i.d exponential random variables \cite{kou01}. More specifically, we have
\EQA
\label{eq:X_J}
\hat{\xi}_k = \sum_{\ell=1}^{k} \xi_{\ell} &{\buildrel dist.\ \over =}&
\begin{cases}
\hat{\xi}_{\ell}^+  = ~~\sum_{i=1}^{\ell} \varepsilon_i^+ & \mbox{with probability } Q_1^{k,\ell},
\quad \ell = 1,\ldots,k \\
\hat{\xi}_{\ell}^-
= - \sum_{i=1}^{\ell} \varepsilon_i^- & \mbox{with probability } Q_2^{k,\ell},
\quad \ell = 1,\ldots,k
\end{cases} \ .
\ENA
Here,
$Q_1^{k,\ell}$ and $Q_2^{k,\ell}$ are given in \eqref{eq:PQ},
and $\varepsilon_i^+$ and $\varepsilon_i^-$ are i.i.d.\ exponential
variables with rates $\eta_1$ and $\eta_2$, respectively.
The PDF for each of the cases in \eqref{eq:X_J} respectively are
\begin{equation}
\begin{aligned}
\label{eq:f_w_n}
p_{\hat{\xi}_{\ell}^+}(y) &= \frac{\e^{- \eta_1 y} \, y^{\ell-1} \, \eta_1^{\ell}}{(\ell-1)!}
\quad \text{for  } \hat{\xi}_{\ell}^+,
\quad
\text{and}
\quad
p_{\hat{\xi}_{\ell}^-}(y) &= \frac{\e^{\eta_2 y} \, (-y)^{\ell-1} \, \eta_2^{\ell} }{(\ell-1)!}
\quad \text{for } \hat{\xi}_{\ell}^-  \ .
\end{aligned}
\end{equation}
Taking into account \eqref{eq:X_J}-\eqref{eq:f_w_n}, \eqref{eq:E_k_def} becomes
\begin{align}
\label{eq:E_J_compute}
E_k
&= \sum_{\ell=1}^k \, Q_1^{k,\ell} \underbrace{\int_{0}^{\infty}
\exp\l( - \f{\l(\beta + s  + y \r)^2}{4\alpha} \r) \,
p_{\hat{\xi}_{\ell}^+}(y) \, \mathrm{d}y}_{E_{1,\ell}}
+ \sum_{\ell=1}^k \, Q_2^{k,\ell} \underbrace{\int_{-\infty}^{0}
\exp\l( - \f{\l(\beta + s + y \r)^2}{4\alpha} \r) \,
p_{\hat{\xi}_{\ell}^-}(y) \, \mathrm{d}y}_{E_{2,\ell}}.
\end{align}
Considering the term $E_{1,\ell}$,
\begin{equation*}
\begin{aligned}
E_{1,\ell}
= \int_0^\infty
\exp\l( - \f{\l(\beta + s + y \r)^2}{4\alpha} \r) \,
\frac{\e^{-\eta_1 y} \, y^{\ell-1} \, {\eta_1}^{\ell}}{(\ell-1)!} \, \mathrm{d}y
= {\eta_1}^{\ell} \, \int_0^\infty \frac{1}{(\ell-1)!} \,
\exp\l( - \f{\l(\beta + s + y \r)^2}{4\alpha} -
\eta_1 y \r) \,
y^{\ell-1} \, \mathrm{d}y \ .
\end{aligned}
\end{equation*}
Making the change of variable
$y_1 = \f{\beta + s + y}{\s{2\alpha}}$,
\begin{equation*}
\begin{aligned}
E_{1,\ell}
&= {\eta_1}^{\ell} \, \int_{\f{\beta + s }{\sqrt{2\alpha}}}^\infty
\frac{1}{(\ell-1)!} \,
\e^{-\frac{1}{2} y_1^2 - \eta_1 \sqrt{2\alpha} y_1} \,
\e^{\eta_1 \, \l(\beta+ s \r)} \,
\l(\sqrt{2\alpha} \, y_1 - \beta - s  \r)^{\ell-1} \,
\sqrt{2\alpha} \, \mathrm{d}y_1\\
&= \l(\eta_1 \, \sqrt{2\alpha}\r)^{\ell} \,
\e^{\eta_1 \, \l(\beta+ s  \r)} \,
\int_{\f{\beta + s}{\sqrt{2\alpha}}}^\infty \frac{1}{(\ell-1)!} \,
\e^{-\frac{1}{2} y_1^2 - \eta_1 \sqrt{2\alpha} y_1} \,
\l(y_1 - \f{\beta+s}{\sqrt{2\alpha}}\r)^{\ell-1} \, \mathrm{d}y_1 \ .
\end{aligned}
\end{equation*}
Making the change of variable $y_2 = y_1 + \eta_1 \sqrt{2\alpha}$,
\begin{equation}
\label{eq:a1k}
\begin{aligned}
E_{1,\ell}
&= \l(\eta_1 \, \sqrt{2\alpha}\r)^{\ell} \,
\e^{\eta_1 \, \l(\beta+s  \r) + \eta_1^2 \alpha} \,
\int_{\f{\beta + s }{\sqrt{2\alpha}} + \eta_1 \sqrt{2\alpha}}^\infty \frac{1}{(\ell-1)!} \,
\l(y_2 - \l(\eta_1 \sqrt{2\alpha} + \f{\beta+s }{\sqrt{2\alpha}}\r)\r)^{\ell-1} \,
\e^{-\frac{1}{2} y_2^2} \,\mathrm{d}y_2 \\
&= \l(\eta_1 \, \sqrt{2\alpha}\r)^{\ell} \,
\e^{\eta_1 \, \l(\beta+s  \r) + \eta_1^2 \alpha} \,
\mathrm{Hh}_{\ell-1}\l(\eta_1 \sqrt{2\alpha} + \f{\beta+s }{\sqrt{2\alpha}}\r)\ ,
\end{aligned}
\end{equation}
where $\mathrm{Hh}_{\ell}$ is defined in \eqref{eq:Hhk}.
Similarly for the term $E_{2,\ell}$,
\begin{equation}
\label{eq:a2k}
\begin{aligned}
E_{2,\ell}
&= \int_{-\infty}^0 \,
\exp\l( - \f{\l(\beta + s  + y \r)^2}{4\alpha} \r) \,
\frac{\e^{\eta_2 y} \, \l(-y\r)^{\ell-1} \, \eta_2^{\ell} }{(\ell-1)!} \, \mathrm{d}y\\
&= \l(\eta_2 \, \sqrt{2\alpha}\r)^{\ell} \,
\e^{-\eta_2 \, \l(\beta+s  \r)} \,
\int_{-\infty}^{\f{\beta + s }{\sqrt{2\alpha}}} \frac{1}{(\ell-1)!} \,
\e^{-\frac{1}{2} y_1^2 + \eta_2 \, \sqrt{2\alpha} \, y_1} \,
\l(-y_1 + \f{\beta + s }{\sqrt{2\alpha}} \r)^{\ell-1} \, \mathrm{d}y_1 \\
&= \l(\eta_2 \, \sqrt{2\alpha}\r)^{\ell} \,
\e^{-\eta_2 \, \l(\beta+s  \r) + \eta_2^2 \alpha} \,
\int_{-\f{\beta + s }{\sqrt{2\alpha}} + \eta_2 \sqrt{2\alpha}}^{\infty} \frac{1}{(\ell-1)!} \,
\l(y_2 - \l(\eta_2 \sqrt{2\alpha} - \f{\beta + s }{\sqrt{2\alpha}}\r) \r)^{\ell-1}
\e^{-\frac{1}{2} y_2^2} \,\mathrm{d}y_2 \\
&= \l(\eta_2 \, \sqrt{2\alpha}\r)^{\ell} \,
\e^{-\eta_2 \, \l(\beta+s  \r) + \eta_2^2 \alpha} \,
\mathrm{Hh}_{\ell-1}\l(\eta_2 \sqrt{2\alpha} - \f{\beta + s }{\sqrt{2\alpha}} \r),
\end{aligned}
\end{equation}
where $\mathrm{Hh}_{\ell}$ is defined in \eqref{eq:Hhk}.
Using \eqref{eq:E_J_compute}, \eqref{eq:a1k} and \eqref{eq:a2k} together with further simplifications gives us \eqref{eq:g_EJ_double_exp}.

\ifx\allfiles\undefined
\bibliographystyle{mynatbib}
\setlength{\bibsep}{1.0ex}
\bibliography{paper2_bib}
\end{document}